\newtheorem{definition}{Definition}  
\newtheorem{proposition}{Proposition}  
\newtheorem{lemma}{Lemma}  
\title{Norm-Governed Multi-Agent Decision-Making in Simulator-Coupled Environments:
The Reinsurance Constrained Multi-Agent Simulation Process (R–CMASP)}
\author[1]{Stella C. Dong\thanks{Corresponding author: \href{mailto:stella.dong@reinsuranceanalytics.io}{stella.dong@reinsuranceanalytics.io}}}
\affil[1]{Reinsurance Analytics, San Francisco, CA, USA}
\date{}
\begin{document}

\maketitle

\begin{abstract}
Reinsurance decision-making displays the core structural properties that motivate multi-agent models: distributed and asymmetric information, partial observability, heterogeneous epistemic responsibilities, environment dynamics mediated by external stochastic simulators, and hard prudential and regulatory constraints. Deterministic workflow automation cannot satisfy these requirements, as it lacks the epistemic flexibility, cooperative coordination mechanisms, and norm-sensitive behaviour necessary for institutional risk-transfer processes.

We propose the \emph{Reinsurance Constrained Multi-Agent Simulation Process} (\textsc{R--CMASP}), a formal model that extends stochastic games and Dec-POMDPs by incorporating three elements missing from prior MAS and agentic-LLM frameworks: (i) \emph{simulator-coupled transition dynamics} driven by catastrophe, capital, and portfolio engines; (ii) \emph{role-specialized epistemic agents} with structured observability, belief updates, and typed communication; and (iii) a \emph{normative feasibility layer} encoding solvency, regulatory, and organizational rules as admissibility constraints on joint actions. \textsc{R--CMASP} thereby provides a principled environment model for domains in which norms and simulators jointly determine feasible behaviour.

We instantiate the formalism using LLM-based agents equipped with tool access, structured prompts, and typed message protocols (state broadcasts, proposals, critiques, constraints). Experiments in a domain-calibrated synthetic environment show that governed multi-agent coordination yields more stable, consistent, and norm-adherent equilibria than both deterministic automation and monolithic LLM baselines—achieving lower pricing variance, higher capital efficiency, and improved clause-interpretation accuracy—while preserving auditability and human oversight. We further show that embedding prudential norms as admissibility constraints and structuring communication into typed acts measurably enhances equilibrium stability and normative adherence at the MAS level.

Beyond the reinsurance domain, our results support a broader methodological conclusion: regulated, simulator-driven decision environments are most naturally and effectively modelled as \emph{norm-governed, simulator-coupled multi-agent systems}. \textsc{R--CMASP} provides a general computational ontology for integrating epistemic reasoning, external simulators, constrained cooperation, and formal governance in high-stakes institutional settings.
\end{abstract}

\noindent\textbf{Keywords:}
multi-agent systems;
normative and governed MAS;
organizational and epistemic coordination;
constrained cooperative decision-making;
simulator-coupled transitions;
regulated financial domains.

\section{Introduction}
\label{sec:intro}

Reinsurance is a foundational mechanism of global risk transfer, enabling insurers to pool, cede, and diversify exposure to extreme natural catastrophe, cyber, liability, and other tail events \cite{cummins2012reinsurance, swissre2023sigma}. It supports solvency, stabilizes balance sheets, and mitigates accumulation risk, thereby contributing to systemic financial resilience. Yet the workflows required to interpret treaties, validate exposures, integrate stochastic hazard models, price risk, allocate capital, and steer portfolios are intrinsically complex. They combine (i) legally dense and often ambiguous contractual semantics, (ii) simulation-based hazard and loss models, and (iii) strict prudential regimes such as Solvency~II, IFRS~17, and NAIC~RBC that impose binding valuation, capital, and governance rules \cite{eiopa2020solvency2, ifrs17standard2020}.

Despite substantial tooling, reinsurance operations remain predominantly manual and epistemically fragmented. Underwriters, exposure analysts, catastrophe modelers, pricing actuaries, and capital specialists operate on heterogeneous information sets with partially conflicting objectives. Treaty interpretation requires resolving linguistic ambiguity \cite{mangini2019treatyanalysis}; pricing requires reconciling correlated hazard outputs \cite{grossi2005catmodels}; and portfolio steering must account for non-stationary perils and accumulation \cite{barnett2021nonstationary, mcgill2021aggregation}. These asymmetries routinely yield divergent assumptions, duplicated analyses, and elevated operational and model risk \cite{lloyds2019processreview}.

Classical workflow automation captures only the deterministic fragments of these processes. ETL pipelines, RPA systems, and ACORD-normalization frameworks \cite{rpa_insurance2018} function when transformations are unambiguous, but from a multi-agent systems (MAS) perspective they instantiate \emph{degenerate agents}: they have no belief states, cannot update interpretations, and cannot negotiate joint decisions under uncertainty. As argued in MAS and socio-technical research \cite{rahwan2019society_agents, wooldridge2009mas}, such systems fail in domains characterized by partial observability, distributed expertise, interdependent constraints, and binding norms—all intrinsic to reinsurance.

Recent developments in large language models (LLMs) \cite{brown2020gpt3, openai2024o1} and agentic reasoning frameworks—including reasoning--acting loops \cite{yao2022react}, metacognitive self-critique \cite{shinn2023reflexion}, and structured multi-agent protocols \cite{li2023autogen, wang2023surveyagents}—enable richer epistemic and organizational behaviour. These systems can parse heterogeneous documents, invoke external simulators, critique intermediate assumptions, and coordinate through structured message exchange. The epistemic and organizational structure of reinsurance aligns closely with these capabilities: workflows require semantic interpretation, simulator-grounded reasoning, constraint satisfaction, interdependent optimization, and explicit governance.

Crucially, reinsurance decision-making is not purely technical but profoundly \emph{normative}. Decisions must satisfy regulatory constraints, internal risk appetite, and auditability requirements. Following governed MAS principles \cite{rahwan2019society_agents}, the goal is not to replace human underwriters or actuaries, but to augment their epistemic and organizational capacity. Agents serve as analytic collaborators—surfacing inconsistencies, enforcing cross-role coherence, automating repetitive reasoning steps, and maintaining transparent audit trails—while final authority remains with human experts.

\paragraph{Contributions.}
This paper develops a theoretically grounded, institutionally aligned framework for governed multi-agent decision-making in reinsurance. Our contributions are:

\begin{itemize}
    \item \textbf{Conceptual.}
    We formalize the fundamental distinction between deterministic automation and autonomous, communication-enabled agents, showing that the epistemic, stochastic, and normative structure of reinsurance renders fixed pipelines structurally inadequate. From a MAS perspective, we contribute a new class of \emph{simulator-coupled, norm-governed decision processes} together with an \emph{operational equilibrium notion} tailored to institutional settings.

    \item \textbf{Architectural.}
    We introduce a multi-agent architecture mirroring the organizational decomposition of reinsurance practice, with role-specialized agents for treaty interpretation, exposure reasoning, hazard modeling, pricing, capital evaluation, portfolio steering, claims handling, and governance. Typed communication (state, proposal, critique, constraint) enables cooperative reasoning under uncertainty and normative constraints.

    \item \textbf{Formal and empirical.}
    We propose the \emph{Reinsurance Constrained Multi-Agent Simulation Process} (\textsc{R--CMASP}), a formal model that augments stochastic games and Dec-POMDPs with simulator-coupled transitions, normative feasibility constraints, and structured inter-agent communication. In a calibrated synthetic environment, governed multi-agent coordination yields substantial improvements in pricing stability, capital efficiency, interpretive accuracy, and governance robustness relative to deterministic automation and monolithic LLM baselines.
\end{itemize}

The paper’s main contributions are therefore: (i) a formal model (\textsc{R--CMASP}), (ii) an aligned multi-agent architecture, and (iii) an empirical instantiation in a simulator-grounded environment; taken together, these demonstrate how simulator-coupled, norm-governed MAS can be engineered and analysed in high-stakes institutional settings.

\paragraph{Paper Outline.}
Section~\ref{sec:related} reviews relevant work.  
Section~\ref{sec:automation-vs-agents} contrasts deterministic workflows with agentic computation.  
Section~\ref{sec:architecture} presents the multi-agent architecture.  
Section~\ref{sec:formal-model} introduces the \textsc{R--CMASP} formalism.  
Section~\ref{sec:experiments} reports empirical evaluations.  
Section~\ref{sec:discussion} analyzes theoretical implications.  
Section~\ref{sec:conclusion} concludes.

\section{Related Work}
\label{sec:related}

Reinsurance decision-making integrates semantic interpretation, stochastic simulation, multi-objective optimization, and institutionally governed coordination. Developing a principled multi-agent formalism for this domain touches five relevant literatures: (i) deterministic workflow automation in insurance, (ii) large language models (LLMs) for semantic reasoning, (iii) agentic LLM systems with tool use and structured communication, (iv) foundational multi-agent systems (MAS) research on coordination, norms, and organizational design, and (v) governance mechanisms for high-stakes socio-technical systems. We synthesize these strands using MAS-theoretic concepts—epistemic reasoning, normative constraints, and distributed coordination—to identify a structural gap motivating \textsc{R--CMASP}.  

Research in normative and organizational multi-agent systems has long examined 
obligations, permissions, institutional rules, and coordination under uncertainty. 
Our model builds on this lineage while integrating features—external simulators 
and prudential constraints—not addressed in prior MAS frameworks.

\subsection{Insurance Automation}
\label{sec:related-automation}

Operational automation in insurance is dominated by deterministic pipelines, RPA systems, and ACORD-standardized processing \cite{rpa_insurance2018}. In MAS terms, these systems instantiate \emph{degenerate agents}: they implement fixed transition rules, maintain no epistemic state, and cannot revise beliefs or form intentions in the BDI sense \cite{wooldridge2009mas}. They perform well only when data are schema-preserving and semantics unambiguous.

Reinsurance workflows violate these assumptions. They require (i) resolving linguistic and legal ambiguity in treaty text \cite{mangini2019treatyanalysis}, (ii) integrating heterogeneous stochastic hazard and capital models \cite{grossi2005catmodels}, and (iii) coordinating across organizational roles operating under asymmetric information. Deterministic automation fails under exactly these conditions, producing inconsistencies, duplicated analyses, and elevated model and operational risk \cite{lloyds2019processreview, cummins2012reinsurance}. This motivates a move from fixed workflows to systems capable of epistemic reasoning and cooperative coordination.

\subsection{Large Language Models}
\label{sec:related-llms}

LLMs provide powerful semantic grounding and approximate belief updating \cite{brown2020gpt3, openai2024o1}. They can extract treaty structure, reason over text, and perform multi-step chains of thought.

However, LLMs are \emph{not} agents in the MAS sense:  
they lack persistent epistemic state, explicit action models, shared world-state representations, commitment strategies, and formal coordination mechanisms. They cannot reason about other agents' knowledge or enforce institutional norms. Thus, while essential for treaty interpretation, LLMs alone do not provide the machinery required for structured multi-agent decision-making.

\subsection{Tool-Using and Agentic LLM Systems}
\label{sec:related-agentic}

Frameworks such as ReAct \cite{yao2022react}, Reflexion \cite{shinn2023reflexion}, and AutoGen \cite{li2023autogen} extend LLMs toward agent-like behaviour via tool invocation, self-critique, and structured dialogues. These systems approximate classical MAS ideas—distributed planning, critique cycles, and joint intention formation \cite{shoham1995mas, kraus1997mas}.

Yet they remain \emph{domain-general}. They do not:
\begin{itemize}
    \item embed domain-specific catastrophe, capital, or portfolio simulators directly in the transition model;
    \item represent multi-objective institutional rewards shaped by solvency or regulatory constraints;
    \item incorporate normative feasibility conditions or governance mechanisms as first-class components.
\end{itemize}

These gaps directly correspond to missing elements in classical normative 
and organizational MAS frameworks.

\subsection{Multi-Agent Systems in Finance}
\label{sec:related-mas}

The MAS literature offers deep foundations for coordination, distributed 
constraint satisfaction, and normative behaviour \cite{wooldridge2009mas, 
shoham1995mas}. Prior work has extensively examined  
(i) role-based and organizational MAS,  
(ii) normative systems with obligations, permissions, and sanctions, and  
(iii) negotiation and coordination under uncertainty.  
These strands provide the conceptual basis for modelling cooperative decision 
processes involving heterogeneous roles, partial observability, and 
institutionally governed constraints.

In finance, MAS approaches model markets, trading behaviour, and learning dynamics \cite{lebaron2006agentfinance, hommes2021agentfinc}, typically in competitive settings with strategic interplay. Reinsurance differs structurally:

\begin{itemize}
    \item interactions are \emph{cooperative}, not adversarial;
    \item agents represent \emph{organizational roles}, not self-interested actors;
    \item feasibility is defined by \emph{prudential and regulatory constraints}, not market equilibria;
    \item transition dynamics depend on \emph{external simulators} (catastrophe, capital, portfolio).
\end{itemize}

Existing MAS models provide coordination and negotiation toolkits but do not capture the simulator-coupled, norm-governed nature of institutional reinsurance.

\subsection{Human--AI Governance and Oversight}
\label{sec:related-governance}

Governance research emphasizes alignment, traceability, auditability, and override mechanisms in autonomous systems \cite{rahwan2019society_agents}. Within MAS, this corresponds to integrating \emph{norms}, \emph{institutional rules}, and \emph{supervisory agents} to maintain global consistency.

Reinsurance is intensively regulated: solvency constraints, reporting requirements, and internal risk policies shape the feasible action space. Modern agentic LLM systems do not encode these constraints, and classical normative MAS models do not incorporate external stochastic simulators or capital formulas as part of the environment dynamics. This gap motivates formalism in which governance and simulator-based feasibility jointly constrain behaviour.

\subsection{Comparison of Prior Work}
\label{sec:related-comparison}

Table~\ref{tab:related-comparison} compares prior strands along MAS-theoretic dimensions essential for institutional reinsurance: semantic grounding, tool-mediated reasoning, multi-step coordination, constraint negotiation, normative compliance, auditability, and human oversight.

\begin{table*}[htpb!]
\centering
\scriptsize
\caption{Comparison of prior research strands with respect to MAS-theoretic capabilities required for institutional reinsurance.}
\label{tab:related-comparison}
\begin{tabular}{lcccccc}
\toprule
\textbf{Capability} &
\textbf{Insurance} &
\textbf{LLMs} &
\textbf{Agentic} &
\textbf{MAS in} &
\textbf{Governance} &
\textbf{R--CMASP} \\
& \textbf{Automation} & & \textbf{LLMs} & \textbf{Finance} & \textbf{Research} & \textbf{(ours)} \\
\midrule
Semantic treaty interpretation & -- & \checkmark & \checkmark & -- & -- & \checkmark \\
Tool-mediated hazard/capital reasoning & -- & -- & \checkmark & -- & -- & \checkmark \\
Multi-step workflow coordination & -- & -- & \checkmark & \checkmark & -- & \checkmark \\
Negotiation across objectives & -- & -- & \checkmark & \checkmark & -- & \checkmark \\
Normative/regulatory constraints & -- & -- & -- & -- & \checkmark & \checkmark \\
Auditability and safety & -- & -- & -- & -- & \checkmark & \checkmark \\
Human-in-the-loop oversight & -- & -- & -- & -- & \checkmark & \checkmark \\
\bottomrule
\end{tabular}
\end{table*}

\subsection{Gap Analysis}
\label{sec:related-gap}

The synthesis reveals structural deficiencies across all prior approaches:

\begin{itemize}
    \item Deterministic automation lacks semantic grounding, epistemic reasoning, and deliberative coordination.
    \item LLMs supply semantics but lack persistent beliefs, action models, and multi-agent coordination.
    \item Agentic LLM frameworks support communication and tool use but not normative feasibility or domain-specific constraints.
    \item MAS in finance address competitive markets, not cooperative, regulation-bound institutional workflows.
    \item Governance research specifies principles but does not operationalize them within simulator-coupled MAS architectures.
\end{itemize}

Importantly, while normative MAS models reason about obligations, permissions, 
and sanctions, they typically do \emph{not} integrate external stochastic 
simulators or prudential capital rules as explicit components of the environment 
dynamics. \textsc{R--CMASP} incorporates both elements directly—treating 
simulation outputs and regulatory constraints as first-class drivers of 
state transitions—thereby bridging classical normative MAS with modern 
agentic systems that interact with quantitative models and institutional rules.

In combination, these gaps indicate the absence of a unified framework integrating semantic interpretation, simulator-coupled transitions, constrained multi-agent coordination, and explicit normative governance. This motivates the \textsc{R--CMASP} formalism introduced in Section~\ref{sec:architecture}.

\section{Automation vs.\ AI Agents}
\label{sec:automation-vs-agents}

Reinsurance decision-making integrates semantic interpretation, stochastic simulation, multi-objective coordination, and governance under binding regulatory norms. From a multi-agent systems (MAS) perspective, this environment exhibits partial observability, distributed expertise, and normative constraints that make static, unilateral decision rules inadequate. This section sharpens the distinction between \emph{deterministic automation} and \emph{autonomous agents}, drawing on epistemic logic, organizational MAS theory, and cooperative game-theoretic models.

\subsection{Deterministic Automation}
\label{sec:deterministic-automation}

Classical workflow automation instantiates a \emph{degenerate} form of agency. Such systems implement fixed, context-insensitive transition functions
\[
s_{t+1} = f(s_t),
\]
with no explicit epistemic state, deliberation, or communication. RPA, ETL pipelines, and deterministic workflow engines used in insurance operations exemplify this category \cite{rpa_insurance2018}.

\paragraph{Epistemic limitations.}
In epistemic logic, deterministic workflows lack knowledge ($K_i$) and belief ($B_i$) modalities. They cannot:
\begin{itemize}
    \item revise interpretations of ambiguous treaty text;
    \item incorporate stochastic evidence from hazard or capital simulators;
    \item represent uncertainty or distribution shift;
    \item reason about what other organizational roles know or assume.
\end{itemize}
Their epistemic state is effectively trivial, which makes them brittle when meaning, evidence, and assumptions evolve—precisely the setting of treaty interpretation, non-stationary hazard modeling, and capital evaluation.

\paragraph{Organizational and normative limitations.}
Normative MAS theory requires agents that can check obligations, permissions, and prohibitions and react to violations \cite{boella2009normative}. Deterministic systems with a fixed $f$ cannot:
\begin{itemize}
    \item verify solvency or capital constraints (e.g., SCR requirements) in a history-dependent way;
    \item enforce internal risk policies or escalation rules that depend on evolving evidence;
    \item reason about exceptions or conflicts between norms.
\end{itemize}
Encoding the space of regulatory and institutional contingencies into a single, static $f$ is infeasible; norm evaluation itself becomes an open-ended reasoning task.

\paragraph{Coordination and interdependence limitations.}
Reinsurance evaluation couples multiple interdependent tasks: treaty interpretation informs hazard modeling; hazard outputs inform pricing; pricing is constrained by capital and portfolio requirements. Deterministic automation lacks mechanisms for:
\begin{itemize}
    \item \emph{coordinating} across roles with asymmetric information;
    \item \emph{distributed constraint satisfaction} across interdependent objectives;
    \item \emph{proposal--critique cycles} characteristic of organizational decision-making \cite{kraus1997mas, jennings2000mas}.
\end{itemize}
Game-theoretically, such workflows have essentially \emph{zero cooperative capacity}: no joint intention formation, no search over Pareto trade-offs, and no adaptive response to normative or evidential feedback.

\paragraph{Deterministic workflows as a trivial R--CMASP.}
The above limitations can be expressed formally within our framework.

\begin{proposition}
\label{prop:degenerate-rcmasp}
Any deterministic workflow with fixed transition function $f : S \rightarrow S$ can be represented as an \textsc{R--CMASP} with a single agent, trivial epistemic state, and no communication. However, such a system cannot satisfy any norm that requires history-dependent or belief-dependent evaluation beyond $f$.
\end{proposition}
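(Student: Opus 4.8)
The plan is to prove the two halves of the statement separately: a constructive embedding establishing representability, and a separation argument establishing the limitation. For the embedding, I would instantiate every component of the \textsc{R--CMASP} tuple in its most degenerate admissible form. Take the agent set to be the singleton $\{1\}$, retain the workflow state space $S$, and let the single agent have exactly one available action $a^\ast$. I would define the (in general simulator-coupled, stochastic) transition kernel to be the Dirac measure concentrated on the deterministic successor, i.e.
\[
T(s' \mid s, a^\ast) = \mathbf{1}[s' = f(s)],
\]
so that a deterministic $f$ is recovered as the special case in which the coupled ``simulator'' is itself deterministic. The observation map I would take to be constant (equivalently, the identity with no latent uncertainty), forcing the belief state $b_1$ to be a fixed point mass and hence trivial; the typed-communication layer is vacuous because a single agent exchanges no messages; and the normative feasibility layer admits the unique action $a^\ast$ at every state, so every joint action is trivially feasible. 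It remains only to check that this tuple satisfies the well-formedness conditions of an \textsc{R--CMASP} (stated in Section~\ref{sec:formal-model}) and that its induced trajectory is exactly $s_{t+1} = f(s_t)$; both are immediate from the construction.

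For the limitation, I would first fix the notion of a norm that ``requires history-dependent or belief-dependent evaluation beyond $f$.'' I would model such a norm as a predicate $\phi$ on the realized history $h_t = (s_0,\dots,s_t)$ together with the agent's epistemic state, and call it genuinely history-dependent when there exist two admissible histories $h, h'$ that agree on the current state, $s_t = s_t'$, yet on which $\phi$ must take different truth values. The crux is that in the degenerate instance above, every quantity available for norm evaluation---the next state $f(s_t)$, the entire deterministic continuation $f^2(s_t), f^3(s_t), \dots$, and the constant belief $b_1$---is a function of the present state $s_t$ alone. Consequently the system's evaluation is measurable with respect to the $\sigma$-algebra generated by the current state under the $f$-dynamics, so it assigns identical outcomes to $h$ and $h'$. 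This contradicts the requirement that $\phi$ separate them, so no such norm can be satisfied. The belief-dependent case is the same argument, using the observation that a constant $b_1$ cannot distinguish any two situations a nontrivial belief predicate is meant to distinguish.

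The main obstacle I anticipate is conceptual rather than computational: pinning down ``beyond $f$'' so that the separation argument is neither vacuous nor circular. Because the proposition is stated ahead of the formal definition, the embedding must be phrased so that it instantiates precisely the components introduced in Section~\ref{sec:formal-model}, and the limitation must quantify over exactly those norms whose satisfaction is not already determined by the current-state map $f$. Once the norm class is delimited as predicates that are provably not $\sigma(s_t)$-measurable, the impossibility reduces to the observation that a memoryless, trivial-belief, single-action system can realize only $\sigma(s_t)$-measurable evaluations---making the two halves of the proposition dual facets of the same expressiveness bound. I would close by noting that any attempt to recover history-dependence necessarily augments $S$ or $f$, i.e.\ steps outside the fixed $f$ hypothesized in the statement, which is exactly the content of the phrase ``beyond $f$.''
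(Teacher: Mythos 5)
Your construction half matches the paper's almost verbatim: singleton agent, singleton action set, Dirac transition kernel $\delta_{f(s)}$, full (trivial) observation, empty communication graph, and a normative layer restricted to what the degenerate instance can express. The one divergence is in the impossibility half. The paper argues informally that a history- or belief-dependent norm ``cannot be evaluated or enforced other than by hard-coding its consequences into $f$,'' and then appeals to the open-ended, context-dependent nature of norms like dynamic SCR constraints to conclude that such hard-coding is infeasible in general. You instead delimit the norm class precisely---predicates $\phi$ that are not measurable with respect to $\sigma(s_t)$, witnessed by two admissible histories agreeing on the current state but required to receive different verdicts---and derive a clean separation: every quantity the degenerate system can consult ($f(s_t)$, its iterates, the constant belief) is a function of $s_t$ alone, so the system cannot distinguish the witnessing histories. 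Your version is the mathematically tighter one and turns the paper's sketch into an actual expressiveness bound; the paper's version buys a broader (if vaguer) claim, namely that even attempting to absorb the norm into $f$ fails for open-ended regulatory norms, which is the point it needs for the surrounding argument about institutional reinsurance. One small caveat on your side: your separation argument is vacuous for norms whose witnessing history pairs are never both reachable under the deterministic $f$-dynamics (e.g., if $f$ is injective so every reachable state has a unique history); you should either build reachability into your definition of ``genuinely history-dependent'' or note that the claim is relative to the admissible histories of the instance. Neither gap is fatal, and your closing remark---that recovering history-dependence forces an augmentation of $S$ or $f$, which is exactly what ``beyond $f$'' excludes---is a sharper articulation of the proposition's intent than the paper gives.
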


\begin{proof}[Sketch]
Construct $\mathcal{M} = \langle S, \{\mathcal{A}_1\}, T, \{\Omega_1\}, O, C, R, \mathcal{N}, \mathcal{U} \rangle$ with:
(i) a single agent $1$;  
(ii) $\mathcal{A}_1$ a singleton (the agent has no real choice);  
(iii) $T(s, a) = \delta_{f(s)}$ implementing the deterministic workflow;  
(iv) $\Omega_1 = S$ and $O(s,a) = \delta_s$ (full observation);  
(v) $C = \emptyset$ (no communication);  
(vi) $R$ arbitrary; and  
(vii) $\mathcal{N}, \mathcal{U}$ containing only norms that can be evaluated pointwise on $(s,a)$ without reference to history or beliefs.

This construction shows that any such workflow is an \textsc{R--CMASP} instance. Now consider any norm $N \in \mathcal{N}$ that requires (a) dependence on past states $(s_0,\dots,s_t)$ or (b) dependence on an agent belief $B_1$ that is not already encoded in $f$. Because $T$ and the policy are fixed, and no explicit history or belief state is maintained, $N$ cannot be evaluated or enforced other than by hard-coding its consequences into $f$. For open-ended, context-dependent norms (e.g., dynamic SCR constraints under changing scenarios), this is not possible in general. Hence such a system cannot, in any non-trivial sense, satisfy history- or belief-dependent norms.
\end{proof}

Proposition~\ref{prop:degenerate-rcmasp} formalizes the intuition that deterministic workflows are a strict special case of our model with trivial epistemic and normative structure. Institutional reinsurance, by contrast, requires non-trivial $\mathcal{N}$, $\mathcal{U}$, and communication.

\subsection{Autonomous AI Agents}
\label{sec:autonomous-ai-agents}

Autonomous agents differ fundamentally from deterministic systems by maintaining epistemic representations, deliberative policies, and communicative capabilities. Modern LLM-driven agents combine semantic grounding with multi-step planning, critique, and tool-mediated reasoning \cite{brown2020gpt3, yao2022react, shinn2023reflexion}. These capabilities align closely with MAS models of reasoning, organizational coordination, and norm compliance.

\paragraph{Epistemic and semantic agency.}
An autonomous agent $a_i$ maintains belief and knowledge states $(B_i, K_i)$ that support:
\begin{enumerate}
    \item semantic interpretation of treaty clauses and endorsements;
    \item belief revision in response to simulation feedback (hazard, capital, portfolio);
    \item higher-order epistemics (reasoning about others' knowledge), needed for coherent pricing and capital decisions;
    \item formation of mutual and common knowledge, central to cooperative agreement and joint intention.
\end{enumerate}
These features parallel dynamic epistemic logic and epistemic action models 
commonly used in multi-agent reasoning frameworks.

\paragraph{Organizational MAS perspective.}
Reinsurance workflows form an \emph{organizational MAS}: roles (interpretation, hazard modeling, pricing, capital, portfolio, claims) are defined by institutional structure rather than individual incentives. Agents must:
\begin{itemize}
    \item verify compliance with solvency and risk policies;
    \item reason about organizational rules, escalation paths, and exceptions;
    \item maintain justifications for auditability and accountability.
\end{itemize}
These behaviours arise naturally in architectures with typed communication, shared state, and explicit norms \cite{rahwan2019society_agents}.

\paragraph{Cooperative game-theoretic framing.}
Reinsurance decision-making corresponds to a \emph{cooperative, norm-constrained optimization} problem:
\begin{itemize}
    \item utility is shared (pricing accuracy, capital efficiency, solvency preservation);
    \item constraints are shared and hard (SCR limits, concentration bounds, governance rules);
    \item contributions are distributed across specialized roles;
    \item interaction proceeds via proposals, critiques, and constraint checks.
\end{itemize}
This structure fits distributed constraint optimization (DCOP) and joint-intention theory \cite{kraus1997mas, jennings2000mas} more naturally than competitive multi-agent games.

\paragraph{Capabilities of autonomous agents.}
Modern LLM-based agents support core components of cooperative MAS design:
\begin{itemize}
    \item \textbf{Semantic grounding} of treaty text and contractual logic;
    \item \textbf{Long-horizon deliberation} via reasoning--acting loops \cite{yao2022react};
    \item \textbf{Self-critique and metacognition} via Reflexion \cite{shinn2023reflexion};
    \item \textbf{Typed, structured communication} via AutoGen-style protocols \cite{li2023autogen};
    \item \textbf{Tool-mediated inference} through catastrophe, capital, and portfolio simulators;
    \item \textbf{Norm enforcement} through supervisory or governance agents \cite{rahwan2019society_agents}.
\end{itemize}

\paragraph{Why reinsurance strongly motivates agent-based formulations.}
The structural properties of reinsurance align with canonical MAS conditions:
\begin{itemize}
    \item \emph{Partial observability}: roles access different subsets of the global state;
    \item \emph{Distributed expertise}: interpretation, modeling, pricing, and capital evaluation are epistemically distinct;
    \item \emph{Hard normative constraints}: decisions must satisfy solvency, reserving, concentration, and auditability requirements;
    \item \emph{Interdependent objectives}: pricing interacts with capital and portfolio feasibility;
    \item \emph{Communication-dependent coordination}: consistent outcomes require proposal, critique, and consensus formation.
\end{itemize}

Taken together, these features \emph{strongly motivate} agent-based formulations rather than purely deterministic pipelines. Reinsurance decision-making is a prototypical example of a governed, cooperative MAS, which motivates the architecture described in Section~\ref{sec:architecture} and the \textsc{R--CMASP} formal model in Section~\ref{sec:formal-model}.

\section{Multi-Agent Architecture for Reinsurance}
\label{sec:architecture}

Reinsurance operations instantiate a prototypical \emph{organizational} and \emph{norm-governed} multi-agent system (MAS): decision-making is distributed across roles (underwriting, pricing, capital, portfolio, claims), information is heterogeneous and partially observable, and all actions are constrained by institutional and regulatory norms. In this section we describe a multi-agent architecture that operationalizes these properties and serves as a concrete instantiation of the \textsc{R--CMASP} formalism later developed in Section~\ref{sec:formal-model}.

We build on organizational MAS and role-based design \cite{wooldridge2009mas, dignum2004role}, normative systems and institutional reasoning \cite{boella2009normative}, distributed constraint optimization \cite{modi2005dcop}, and epistemic multi-agent reasoning \cite{fagin2003reasoning}. At a high level, the architecture provides: (i) a role structure aligned with reinsurance workflows, (ii) an institutional layer encoding norms and constraints, (iii) a typed communication protocol for cooperative reasoning, and (iv) LLM-based agent policies that implement semantic, epistemic, and normative behaviour.

Formally, we model the architecture as
\[
\mathcal{A} = \langle R,\, \mathcal{I},\, \mathcal{N},\, \mathcal{C},\, \Pi \rangle,
\]
where:
\begin{itemize}
    \item $R$ is a finite set of \emph{roles}, each corresponding to a functional responsibility (e.g., Treaty Interpretation, Pricing, Capital, Portfolio, Claims, Governance);
    \item $\mathcal{I}$ is an \emph{institutional model} capturing the organizational context: permissible workflows, escalation paths, and the mapping from workflows to roles;
    \item $\mathcal{N}$ is a set of \emph{norms} (obligations, permissions, prohibitions) induced by regulation, internal policies, and risk appetite;
    \item $\mathcal{C}$ is a \emph{communication protocol} specifying message types, conversation structures, and logging requirements;
    \item $\Pi = \{\pi_r : r \in R\}$ is a family of role-specific policies implemented by LLM-based agents with tool access and deliberative capabilities.
\end{itemize}

This tuple provides the organizational and normative scaffolding within which the \textsc{R--CMASP} state, transition, observation, and reward structures (Section~\ref{sec:formal-model}) are instantiated.

\paragraph{Architectural choices vs.\ formal requirements.}
Some aspects of $\mathcal{A}$ reflect domain-specific design choices, such as the concrete role set $R$ (e.g., separating Portfolio and Capital) and workflow mapping $\mathcal{I}$. Others are directly motivated by the \textsc{R--CMASP} formal model: the presence of a Governance Agent and Audit Trail Agent traces to the explicit norm sets $\mathcal{N}$ and feasibility constraints $\mathcal{U}$; the typed message protocol $\mathcal{C}$ mirrors the communication graph and speech-act types embedded in the formal model. Thus, the architecture is both \emph{domain-aligned} and \emph{formally grounded}.

\paragraph{Implementation note.}
In the empirical instantiation (Section~\ref{sec:experiments}), each role $r \in R$ is realized by an LLM-based agent with a prompt-specified policy and controlled tool access (hazard simulators, capital engines, retrieval systems). However, \textsc{R--CMASP} itself is agnostic to the underlying policy representation: $\Pi$ could equally comprise hand-crafted rules, learned policies (e.g., via multi-agent reinforcement learning), or hybrids.

\subsection{Organizational and Functional Layering}
\label{sec:layering}

Following role-based and organizational MAS frameworks \cite{dignum2004role, wooldridge2009mas}, we structure the architecture into four functional layers. Each layer groups roles that share epistemic focus and normative responsibilities:

\begin{enumerate}
    \item \textbf{Interpretation and Knowledge Layer.}  
    Agents transform unstructured artefacts (treaties, exposure files, historical documents) into structured epistemic content. Their primary function is semantic grounding and knowledge consolidation for the rest of the system.

    \item \textbf{Risk and Modeling Layer.}  
    Agents interface with stochastic simulators and quantitative models (catastrophe, capital, portfolio). They translate epistemic content into probabilistic beliefs and loss distributions, thereby shaping the stochastic environment of \textsc{R--CMASP}.

    \item \textbf{Decision and Optimization Layer.}  
    Agents coordinate to solve a constrained cooperative decision problem over pricing, capital, retrocession, and portfolio exposures. They operate as a distributed constraint optimization process \cite{modi2005dcop} under the norms $\mathcal{N}$.

    \item \textbf{Governance and Oversight Layer.}  
    Agents enforce institutional norms, verify cross-agent consistency, and decide when human oversight is required. They realize the normative and governance aspects of the institutional model $\mathcal{I}$ and norms $\mathcal{N}$ \cite{boella2009normative, rahwan2019society_agents}.
\end{enumerate}

Each role $r \in R$ is instantiated as an autonomous agent $a_r$ with:
\begin{itemize}
    \item an epistemic state $(K_r, B_r)$ capturing the agent's knowledge and beliefs;
    \item a normative profile (obligations, permissions, prohibitions) inherited from $\mathcal{N}$;
    \item a policy $\pi_r$ mapping local observations and messages into actions and outgoing messages.
\end{itemize}
The resulting architecture can be viewed as an organizational MAS in which agents jointly seek norm-compliant, utility-improving decisions under uncertainty.

\subsection{Workflow-Scoped Agent Activation}
\label{sec:workflow-activation}

Reinsurance processes are naturally segmented into workflows (e.g., pricing a new treaty, assessing a claim, optimizing retrocession), each with specific informational inputs, applicable norms, and optimization goals. To capture this, we introduce a \emph{workflow-scoped activation policy}
\[
\Phi : W \rightarrow 2^{R},
\]
mapping each workflow $w \in W$ to the subset of roles $R_w = \Phi(w)$ required for valid execution.

Activation depends on:
\begin{enumerate}
    \item \textbf{Epistemic pre-conditions}: which artefacts and simulations are available (treaty text, exposures, claims notices, etc.);
    \item \textbf{Normative requirements}: which regulatory and internal checks must be enforced (e.g., SCR impact for pricing);
    \item \textbf{Optimization scope}: which objectives are primary (e.g., capital efficiency versus portfolio diversification);
    \item \textbf{Tool dependencies}: which external simulators or engines must be invoked (hazard models, capital engines, portfolio aggregators).
\end{enumerate}

Table~\ref{tab:workflow-agents} summarizes typical role activations per workflow.

\begin{table}[htbp!]
\centering
\footnotesize
\caption{Workflow-scoped activation of roles in the multi-agent architecture.}
\label{tab:workflow-agents}
\begin{tabular}{p{4.5cm} p{10.5cm}}
\toprule
\textbf{Workflow} & \textbf{Activated Roles} \\
\midrule
Pricing & Treaty Interpretation, Exposure Understanding, Hazard Modeling, Pricing, Capital, Portfolio Steering, Governance \\
Claims Evaluation & Treaty Interpretation, Claims, Scenario/Footprint, Audit Trail, Governance \\
Retrocession Optimization & Scenario/Stress, Hazard Modeling, Capital, Retrocession Strategy, Portfolio Steering, Governance \\
Exposure Management & Exposure Understanding, Hazard Modeling, Scenario/Stress, Portfolio Steering, Capital, Governance \\
Regulatory Reporting & Regulatory Compliance, Audit Trail, Governance, Human Oversight \\
\bottomrule
\end{tabular}
\end{table}

From an organizational MAS viewpoint, each workflow $w$ induces a temporary sub-organization on $R_w$ with its own communication topology, local constraints, and decision responsibilities.

\subsection{Running Example: Pricing a Coastal Wind Treaty}
\label{sec:running-example}

To make the architecture concrete, we sketch a stylized pricing episode for a coastal wind excess-of-loss treaty. The environment provides: (i) a draft treaty wording with layered structure and exclusions, (ii) an exposure file with coastal property risks, and (iii) current portfolio and capital positions.

\paragraph{Activation.}
The workflow-scoped policy $\Phi$ activates roles
\begin{align*}
R_{\text{pricing}} =
\{\text{Treaty Interpretation}, \text{Exposure Understanding}, \text{Hazard Modeling}, \\ \text{Pricing}, \text{Capital}, \text{Portfolio Steering}, \text{Governance}\}.
\end{align*}

\paragraph{Epistemic grounding.}
The Treaty Interpretation Agent parses the wording to infer attachment points, limits, hours clauses, and wind-specific exclusions, and broadcasts a \textsf{State} message with the structured treaty representation. The Exposure Understanding Agent normalizes and validates the exposure file, producing a structured exposure state and issuing a corresponding \textsf{State} message.

\paragraph{Simulation and risk quantification.}
The Hazard Modeling Agent invokes coastal wind catastrophe models, generating annual loss distributions and event sets for the treaty. These populate $S_{\text{hazard}}$ and are communicated via \textsf{State} messages. The Capital Agent computes incremental SCR and solvency ratios under the proposed participation; the Portfolio Steering Agent computes changes in accumulation and concentration metrics in relevant coastal zones.

\paragraph{Proposal--critique--constraint cycle.}
The Pricing Agent uses the structured treaty, hazard outputs, and market priors to issue a \textsf{Proposal} for rate-on-line and terms. The Capital Agent responds with \textsf{Constraint} messages if the proposal breaches solvency or risk-appetite norms; the Portfolio Steering Agent may issue \textsf{Critique} messages if concentration limits are approached. The Pricing Agent revises its proposal accordingly. This exchange continues for a small number of rounds until no unresolved \textsf{Critique} remains and all active \textsf{Constraint} messages are satisfied.

\paragraph{Governance and escalation.}
The Governance Agent checks for internal inconsistencies (e.g., mismatches between treaty interpretation and pricing assumptions, or unexplained deviations from historical practice). If unresolved inconsistencies or norm violations remain, it triggers the Human Oversight Agent, which compiles a summary for underwriters and risk committees. Otherwise, the episode terminates in an operational equilibrium (Section~\ref{sec:formal-model}), and the Audit Trail Agent logs the full interaction.

This running example illustrates how the same formal machinery---roles, message types, norms, and simulator calls---is instantiated in a realistic pricing scenario, and anticipates the formal definitions of state, transition, observation, reward, and feasibility in \textsc{R--CMASP}. We revisit a variant of this coastal wind treaty scenario in Section~\ref{sec:case-study} as a qualitative case study of empirical behaviour.

\subsection{Typed Communication and Interaction Protocol}
\label{sec:interaction-protocol}

Cooperative reasoning is enabled by a structured communication protocol $\mathcal{C}$ over a set of typed messages
\[
\mathcal{M} = \{\textsf{State},\, \textsf{Proposal},\, \textsf{Critique},\, \textsf{Constraint}\}.
\]
This design builds on speech-act inspired agent communication and negotiation frameworks \cite{kraus1997mas, jennings2000mas} and is consistent with message-passing patterns in recent LLM-based multi-agent systems \cite{li2023autogen}.

Each message type has a distinct epistemic and normative role:
\begin{itemize}
    \item \textsf{State}: broadcasts partial views of the global state (e.g., treaty parses, hazard outputs, capital positions), contributing to shared or common knowledge $C_G$ \cite{fagin2003reasoning};
    \item \textsf{Proposal}: encodes candidate decisions (rates, capital allocations, retrocession structures), functioning as publicly observable intentions;
    \item \textsf{Critique}: challenges or refines proposals, enabling distributed belief revision and cooperative constraint resolution;
    \item \textsf{Constraint}: propagates norms and feasibility restrictions (capital floors, risk limits, regulatory requirements) derived from $\mathcal{N}$.
\end{itemize}

Interaction proceeds in discrete rounds:
\begin{align*}
(1)\; \text{receive messages} \;\rightarrow\;
(2)\; \text{update epistemic and normative state} \;\rightarrow\; \\
(3)\; \text{invoke tools / perform inference} \;\rightarrow\;
(4)\; \text{issue new messages or actions}.
\end{align*}
This process implements a distributed constraint optimization mechanism \cite{modi2005dcop} over the constrained decision space defined by \textsc{R--CMASP}, while maintaining auditability via the Audit Trail Agent (Section~\ref{sec:operations}).

\subsection{Interpretation and Knowledge Agents}
\label{sec:interpretation}

Agents in the Interpretation and Knowledge Layer transform heterogeneous artefacts into structured epistemic representations.

\paragraph{Treaty Interpretation Agent.}
Uses LLM-based semantic parsing \cite{brown2020gpt3} to extract layered structures, triggers, exclusions, hours clauses, and other contractual features. Formally, it realizes a mapping from raw text to a structured treaty state $S_{\text{treaty}}$, thereby populating part of the global state in \textsc{R--CMASP}.

\paragraph{Exposure Understanding Agent.}
Normalizes and validates exposure data (e.g., geocoded risks, insured values), assembling $S_{\text{exposure}}$ and resolving missing or inconsistent entries.

\paragraph{Knowledge Retrieval Agent.}
Implements retrieval-augmented reasoning \cite{wang2023surveyagents} over internal corpora (historical treaties, past decisions, regulatory circulars), enabling other agents to condition their policies $\pi_r$ on institutional memory.

\subsection{Risk and Modeling Agents}
\label{sec:risk-modeling}

Agents in the Risk and Modeling Layer couple epistemic content to stochastic simulators and quantitative models, thereby shaping the transition dynamics of \textsc{R--CMASP}.

\paragraph{Hazard Modeling Agent.}
Interfaces with catastrophe models and other hazard simulators \cite{grossi2005catmodels}, selecting perils, regions, and model configurations. It outputs loss distributions and event sets that populate $S_{\text{hazard}}$.

\paragraph{Scenario and Stress Testing Agent.}
Generates multi-peril, correlated scenarios incorporating non-stationarity and emerging risks \cite{barnett2021nonstationary}. These scenarios support counterfactual reasoning and stress testing of portfolio and capital decisions.

\paragraph{Model Risk Agent.}
Assesses model uncertainty, calibration drift, and conflicting outputs across simulators. It may generate \textsf{Constraint} messages when model risk exceeds thresholds, introducing normative obligations to re-evaluate or escalate.

\subsection{Decision and Optimization Agents}
\label{sec:decision-optimization}

Agents in this layer perform cooperative decision-making over a constrained multi-objective space, consistent with DCOP-style formulations \cite{modi2005dcop}. Given a joint decision variable $\mathbf{x}$ (e.g., treaty pricing, capacity allocation, retrocession structure), they implicitly aim to solve:
\[
\max_{\mathbf{x}} \;
U_{\text{pricing}}(\mathbf{x}) + U_{\text{capital}}(\mathbf{x}) + U_{\text{portfolio}}(\mathbf{x})
\quad \text{s.t.} \quad \mathbf{x} \in \mathcal{F},
\]
where $\mathcal{F}$ is the feasible set induced by contractual semantics, hazard outputs, and norms $\mathcal{N}$.

\paragraph{Pricing Agent.}
Combines treaty structure, exposure, hazard outputs, and market priors to propose rate-on-line and structure-specific pricing recommendations. It emits \textsf{Proposal} messages and revises them in response to \textsf{Critique} and \textsf{Constraint} signals.

\paragraph{Capital Agent.}
Computes Solvency Capital Requirement (SCR), RBC measures, diversification benefits, and solvency ratios \cite{eiopa2020solvency2}. When proposals violate capital or solvency norms, it issues \textsf{Constraint} messages, narrowing the feasible set $\mathcal{F}$.

\paragraph{Portfolio Steering Agent.}
Evaluates portfolio-level accumulation, peril correlation, and concentration metrics \cite{mcgill2021aggregation}. It negotiates with the Pricing and Capital Agents to ensure that individual treaty decisions are consistent with portfolio-wide risk appetite.

\paragraph{Retrocession Strategy Agent.}
Evaluates quota-share, excess-of-loss, and aggregate retrocession structures, considering tail-risk mitigation, capital relief, and availability constraints. Conceptually, it operates as a cooperative partner in the joint optimization problem above.

\subsection{Operations, Compliance, and Claims Agents}
\label{sec:operations}

Operational and compliance agents ensure that decisions are correctly executed and institutionally valid.

\paragraph{Claims and Recoveries Agent.}
Interprets claims notifications, event footprints, and treaty triggers to compute recoveries at treaty and retrocession levels. It updates $S_{\text{claims}}$ and informs subsequent capital and portfolio updates.

\paragraph{Regulatory Compliance Agent.}
Checks that decisions adhere to external regulations (Solvency~II, IFRS~17) and internal policies encoded in $\mathcal{N}$. Violations trigger \textsf{Constraint} messages and may force re-negotiation.

\paragraph{Audit Trail Agent.}
Maintains a tamper-evident log of messages, decisions, and constraint applications. This provides an explicit record for ex-post audit, consistent with socio-technical governance requirements \cite{rahwan2019society_agents}.

\subsection{Governance and Human Oversight}
\label{sec:governance}

\paragraph{Governance Agent.}
Acts as a supervisory agent that monitors global consistency and norm satisfaction. It:
\begin{itemize}
    \item cross-checks outputs across agents for epistemic and numerical consistency;
    \item flags hallucinations and contradictions in LLM-generated content;
    \item enforces that the joint policy profile $\Pi$ remains within the normative constraints $\mathcal{N}$.
\end{itemize}
Formally, it seeks to maintain $\Pi \models \mathcal{N}$ and to trigger corrective processes whenever inconsistencies become common knowledge.

\paragraph{Human Oversight Agent.}
Implements a meta-level policy that determines when human intervention is required. It aggregates explanations, counterfactual analyses, and key metrics into summaries suitable for underwriters, actuaries, and risk committees. High-stakes or ambiguous cases are thus escalated to human decision-makers, preserving accountability and expert control.


Overall, this architecture realizes reinsurance decision-making as a governed organizational MAS, providing a concrete operational semantics for the \textsc{R--CMASP} framework. The next section formalizes the underlying state, transition, observation, reward, and constraint structures that define \textsc{R--CMASP} as a domain-grounded cooperative multi-agent decision process.

\section{Formal Model}
\label{sec:formal-model}

Reinsurance decision-making comprises a tightly coupled collection of processes---contract interpretation, exposure analysis, hazard modeling, pricing, capital evaluation, portfolio steering, claims assessment, and regulatory governance. Each component has been studied separately (e.g., catastrophe modeling \cite{grossi2005catmodels, barnett2021nonstationary}, solvency regulation \cite{eiopa2020solvency2}, portfolio aggregation \cite{mcgill2021aggregation}), but there is no unified formalism that treats reinsurance as a \emph{regulated cooperative multi-agent decision problem}.

Classical multi-agent system (MAS) models \cite{shoham1995mas, wooldridge2009mas, kraus1997mas}, stochastic games, and Dec-POMDPs provide rich foundations for joint decision-making under uncertainty, yet they typically abstract away  
(i) \emph{external simulators} (catastrophe, capital, and portfolio models) that define environment dynamics;  
(ii) \emph{contractual semantics} of legal artefacts such as treaties and endorsements; and  
(iii) \emph{prudential and regulatory constraints} that delimit the admissible joint action space.  

Conversely, recent LLM-based agent frameworks \cite{yao2022react, shinn2023reflexion, li2023autogen, wang2023surveyagents} provide mechanisms for tool use, multi-step deliberation, and message passing, but lack a principled treatment of solvency rules, normative constraints, and governance.

To bridge this gap, we introduce the \emph{Reinsurance Constrained Multi-Agent Simulation Process} (\textsc{R--CMASP}). Intuitively, an \textsc{R--CMASP} is a constrained, simulator-augmented multi-agent decision process in which  
(i) agents reason over structured representations of treaties, exposures, hazards, and capital;  
(ii) state transitions are mediated by external quantitative engines; and  
(iii) admissible policies are restricted by a normative layer encoding solvency, regulatory, and governance requirements.

\subsection{R--CMASP: Reinsurance Constrained Multi-Agent Simulation Process}
\label{sec:rcmasp}

Let $I = \{1,\dots,n\}$ index the agents.

\begin{definition}[Reinsurance Constrained Multi-Agent Simulation Process]
A Reinsurance Constrained Multi-Agent Simulation Process is a tuple
\[
\mathcal{M}
=
\bigl\langle
S,\,
\{ \mathcal{A}_i \}_{i \in I},\,
T,\,
\{\Omega_i\}_{i \in I},\,
O,\,
C,\,
R,\,
\mathcal{N},\,
\mathcal{U}
\bigr\rangle,
\]
where:
\begin{itemize}
    \item $S$ is the global state space;
    \item $\mathcal{A}_i$ is the action space of agent $i$ (with joint action space $\mathcal{A} = \prod_{i \in I} \mathcal{A}_i$);
    \item $T : S \times \mathcal{A} \rightarrow \Delta(S)$ is the transition kernel, implemented via external simulators and analytical engines;
    \item $\Omega_i$ is the observation space of agent $i$;
    \item $O : S \times \mathcal{A} \rightarrow \prod_{i \in I} \Delta(\Omega_i)$ is the joint observation kernel;
    \item $C \subseteq I \times I$ is a directed communication graph specifying which agents may send messages to which others;
    \item $R : S \times \mathcal{A} \rightarrow \mathbb{R}^k$ is a vector-valued reward capturing institutional objectives (e.g., capital efficiency, portfolio risk, cross-agent consistency, governance);
    \item $\mathcal{N}$ is a set of \emph{normative rules} (obligations, permissions, prohibitions) regulating joint behaviour;
    \item $\mathcal{U}$ is a set of \emph{exogenous regulatory and solvency constraints} (e.g., Solvency~II, IFRS~17) that any feasible policy profile must satisfy.
\end{itemize}
\end{definition}

\paragraph{Positioning.}
From a MAS perspective, \textsc{R--CMASP} can be viewed as a partially observable stochastic game or Dec-POMDP augmented with (i) a norm-induced feasibility component $\mathcal{F}$ derived from $\mathcal{N}$ and $\mathcal{U}$, and (ii) an explicit simulator coupling in $T$ that encodes domain-specific catastrophe, capital, and portfolio engines. When $\mathcal{N}$ and $\mathcal{U}$ are empty and $T$ abstracts away simulator calls, \textsc{R--CMASP} reduces to a standard partially observable stochastic game. In this sense, \textsc{R--CMASP} generalises existing formalisms by embedding normative feasibility and simulator-driven dynamics as first-class components of the environment.

Conceptually, an \textsc{R--CMASP} is thus a constrained, simulator-augmented, partially observable stochastic game equipped with an explicit normative layer \cite{wooldridge2009mas, boella2009normative}. The kernels $T$ and $O$ capture the stochastic, simulation-driven nature of reinsurance, while $\mathcal{N}$ and $\mathcal{U}$ encode institutional and regulatory structure typically absent from standard MAS and Dec-POMDP models.

\subsection{State Space}
\label{sec:state-space}

The global state $S$ decomposes into semantically meaningful components aligned with reinsurance workflows:
\[
S =
\bigl(
S_{\text{treaty}},
S_{\text{exposure}},
S_{\text{hazard}},
S_{\text{capital}},
S_{\text{portfolio}},
S_{\text{claims}},
S_{\text{regulatory}}
\bigr).
\]

Each component corresponds to a distinct epistemic dimension:
\begin{itemize}
    \item $S_{\text{treaty}}$: structured representations of treaties---layer structures, attachment points and limits, reinstatements, exclusions, hours clauses, event definitions, and ambiguous provisions---derived from textual wordings and endorsements \cite{mangini2019treatyanalysis}.
    \item $S_{\text{exposure}}$: geocoded exposures, insured values, line-of-business indicators, policy terms, and aggregate exposure statistics.
    \item $S_{\text{hazard}}$: simulation outputs from hazard and catastrophe models, including event catalogues, loss distributions, peril correlations, and scenario stress paths \cite{grossi2005catmodels, barnett2021nonstationary}.
    \item $S_{\text{capital}}$: Solvency Capital Requirement (SCR), Risk-Based Capital (RBC), diversification credits, solvency ratios, and internal capital metrics \cite{eiopa2020solvency2}.
    \item $S_{\text{portfolio}}$: accumulation measures, correlation matrices, diversification indices, and capacity utilisation across treaties and perils \cite{mcgill2021aggregation}.
    \item $S_{\text{claims}}$: reported and incurred losses, claim development factors, trigger states, and realised or potential recoveries at treaty and retrocession levels.
    \item $S_{\text{regulatory}}$: parameters and status variables capturing the current regulatory and governance environment (Solvency~II parameters, IFRS~17 discounting and contract boundaries, risk appetite statements, escalation thresholds).
\end{itemize}

Agents observe (possibly partial and noisy) projections of $S$ through $O$ and their role-specific observation spaces $\Omega_i$. For example, the Treaty Interpretation Agent is primarily informed by $S_{\text{treaty}}$, while the Capital Agent observes $(S_{\text{hazard}}, S_{\text{capital}}, S_{\text{portfolio}}, S_{\text{regulatory}})$.

\subsection{Agents, Local Policies, and Belief States}
\label{sec:agents-policies}

Each agent $i \in I$ maintains an internal belief state $b_i$ over $S$, updated via local observations, received messages, and tool outputs. Let $H_i$ denote the space of local histories for agent $i$, consisting of its past actions, observations, and incoming messages. A local policy is a mapping
\[
\pi_i : H_i \rightarrow \Delta(\mathcal{A}_i),
\]
which may be stochastic and history-dependent. In our implementation, $H_i$ is realised as the prompt history (including tool calls) of an LLM-based agent, and $\pi_i$ corresponds to a prompt-specified decision rule; the \textsc{R--CMASP} formalism also accommodates learned policies (e.g., multi-agent reinforcement learning).

The agent set reflects the role structure of Section~\ref{sec:architecture}:
\begin{itemize}
    \item \emph{Epistemic roles}: Treaty Interpretation, Exposure Understanding, Knowledge Retrieval;
    \item \emph{Simulation roles}: Hazard Modeling, Scenario/Stress Testing, Model Risk;
    \item \emph{Decision roles}: Pricing, Capital, Portfolio Steering, Retrocession Strategy;
    \item \emph{Operational and institutional roles}: Claims, Regulatory Compliance, Audit Trail;
    \item \emph{Governance roles}: Governance and Human Oversight \cite{rahwan2019society_agents}.
\end{itemize}

A joint policy profile $\boldsymbol{\pi} = (\pi_i)_{i \in I}$, together with $T$ and $O$, induces a probability distribution over trajectories in $S$ that must respect both the feasibility constraints induced by $\mathcal{N}$ and $\mathcal{U}$ and the communication structure $C$.

\subsection{Communication Graph and Message Semantics}
\label{sec:communication-graph}

Communication is represented by a directed graph
\[
C \subseteq I \times I,
\]
where $(i,j) \in C$ indicates that agent $i$ may send messages to agent $j$. Let $\mathcal{M}$ denote the set of message contents, with each message annotated by a type
\[
m.\text{type} \in \{\textsf{State},\, \textsf{Proposal},\, \textsf{Critique},\, \textsf{Constraint}\},
\]
as introduced in Section~\ref{sec:architecture}.

Messages have two primary formal roles:
\begin{enumerate}
    \item \textbf{Epistemic updates.}  
    \textsf{State} and \textsf{Proposal} messages act as public announcements \cite{fagin2003reasoning}, refining other agents' beliefs $b_j$ about the global state $S$ and about peers' intended actions.

    \item \textbf{Normative signalling.}  
    \textsf{Constraint} messages encode obligations, permissions, or prohibitions derived from $\mathcal{N}$ and $\mathcal{U}$ (e.g., capital floors, exposure limits), while \textsf{Critique} messages indicate non-acceptance or requested revision, guiding the system toward norm-compatible joint actions.
\end{enumerate}

In the implementation considered here, communication proceeds in synchronous rounds, but the formalism is agnostic to timing and admits asynchronous variants.

\subsection{Transition Kernel and Simulator Coupling}
\label{sec:transition-operator}

The transition kernel
\[
T : S \times \mathcal{A} \rightarrow \Delta(S)
\]
captures both deterministic and stochastic updates, including those mediated by external simulators. A single transition may consist of:
\begin{itemize}
    \item evaluating catastrophe or hazard models to generate loss distributions and event sets for selected perils and regions \cite{grossi2005catmodels, barnett2021nonstationary};
    \item running capital engines (e.g., Solvency~II standard formula or internal models) to update $S_{\text{capital}}$ \cite{eiopa2020solvency2};
    \item recalculating portfolio measures in $S_{\text{portfolio}}$ to reflect new binding decisions or retrocession structures \cite{mcgill2021aggregation};
    \item updating $S_{\text{claims}}$ with new reported or settled losses and associated recoveries;
    \item adjusting $S_{\text{regulatory}}$ when thresholds are breached or regulatory statuses change.
\end{itemize}

Thus, the environment dynamics in an \textsc{R--CMASP} are defined not only by abstract state-transition rules, but also by domain-calibrated quantitative engines that reflect external models and regulatory formulas.

\subsection{Reward Structure, Norms, and Feasibility}
\label{sec:reward-objectives}

The vector-valued reward $R$ assigns to each state--action pair $(s,\mathbf{a})$ a tuple
\[
R(s,\mathbf{a}) =
\bigl(
R^{\text{cap}}(s,\mathbf{a}),
R^{\text{port}}(s,\mathbf{a}),
R^{\text{cons}}(s,\mathbf{a}),
R^{\text{gov}}(s,\mathbf{a})
\bigr),
\]
where, for example:
\begin{itemize}
    \item $R^{\text{cap}}$ measures \emph{capital efficiency} (e.g., expected profit per unit of SCR);
    \item $R^{\text{port}}$ penalises \emph{portfolio risk}, accumulation, and concentration;
    \item $R^{\text{cons}}$ penalises \emph{cross-agent inconsistencies} (e.g., conflicting treaty interpretations or incompatible pricing and capital assessments);
    \item $R^{\text{gov}}$ penalises \emph{norm violations}, including breaches of prudential, regulatory, or governance rules \cite{rahwan2019society_agents}.
\end{itemize}

For analysis and evaluation, it is convenient to consider a scalarised global reward
\[
\tilde{R}(s,\mathbf{a})
=
\alpha R^{\text{cap}}(s,\mathbf{a})
-
\beta R^{\text{port}}(s,\mathbf{a})
-
\gamma R^{\text{cons}}(s,\mathbf{a})
-
\delta R^{\text{gov}}(s,\mathbf{a}),
\]
for nonnegative weights $\alpha, \beta, \gamma, \delta$.

Crucially, not all joint actions are admissible. The normative system $\mathcal{N}$ and exogenous constraints $\mathcal{U}$ induce a feasibility set
\[
\mathcal{F} \subseteq S \times \mathcal{A},
\]
such that any feasible policy profile $\boldsymbol{\pi}$ must satisfy
\[
\Pr_{\boldsymbol{\pi},T}
\bigl[
(s_t,\mathbf{a}_t) \in \mathcal{F}
\;\;\text{for all}\;\; t
\bigr]
= 1.
\]
Intuitively, norms and regulations act as hard constraints on behaviour, while $R$ (or $\tilde{R}$) shapes trade-offs within the feasible region.

\paragraph{Norm-induced policy restriction.}
It is useful to characterise how norms constrain behaviour relative to an unconstrained policy profile.

\begin{lemma}[Norm-induced policy restriction]
\label{lem:norm-restriction}
Let $\boldsymbol{\pi}$ be any joint policy profile on $\mathcal{A}$, and let $\mathcal{F} \subseteq S \times \mathcal{A}$ be the feasibility set induced by $(\mathcal{N},\mathcal{U})$. Assume that for every infeasible pair $(s,\mathbf{a}) \notin \mathcal{F}$ there exists at least one \emph{closest} feasible joint action $\mathbf{a}' \in \mathcal{A}$ according to some fixed tie-breaking rule. Then there exists a norm-compliant profile $\boldsymbol{\pi}'$ such that:
\begin{enumerate}
    \item for all $(s,\mathbf{a}) \in \mathcal{F}$, $\boldsymbol{\pi}'(\mathbf{a} \mid h) = \boldsymbol{\pi}(\mathbf{a} \mid h)$ whenever the history $h$ leads to $s$; and
    \item under $(\boldsymbol{\pi}',T)$, all visited state--action pairs lie in $\mathcal{F}$ almost surely.
\end{enumerate}
\end{lemma}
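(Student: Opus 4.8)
The plan is to enforce feasibility by \emph{projection}: I would build a single state-indexed selection map that sends each joint action to a feasible one, leave already-feasible actions untouched, and then define $\boldsymbol{\pi}'$ as the distribution of the projected action. Concretely, for each state $s$ write $\mathcal{F}_s = \{\mathbf{a} \in \mathcal{A} : (s,\mathbf{a}) \in \mathcal{F}\}$ for the feasible action slice, and define $\rho_s : \mathcal{A} \to \mathcal{A}$ by $\rho_s(\mathbf{a}) = \mathbf{a}$ whenever $\mathbf{a} \in \mathcal{F}_s$ and $\rho_s(\mathbf{a}) = \mathbf{a}'$, the closest feasible action supplied by the assumed tie-breaking rule, whenever $\mathbf{a} \notin \mathcal{F}_s$. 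The hypothesis that every infeasible pair admits a uniquely selected closest feasible action makes $\rho_s$ a well-defined function whose range lies in $\mathcal{F}_s$; in particular it forces $\mathcal{F}_s \neq \emptyset$ at every reachable state, which I would flag as the implicit nonemptiness prerequisite. The norm-compliant profile is then obtained by the rule ``sample $\mathbf{a} \sim \boldsymbol{\pi}(\cdot \mid h)$ and play $\rho_s(\mathbf{a})$'', i.e. $\boldsymbol{\pi}'(\cdot \mid h)$ is the law of $\rho_s(\mathbf{a})$ for $\mathbf{a} \sim \boldsymbol{\pi}(\cdot \mid h)$, where $s$ is the state to which $h$ leads.

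Given this construction, condition (2) follows by a short induction on the round index $t$ along any trajectory generated by $(\boldsymbol{\pi}', T)$. At the base step the initial state $s_0$ is fixed; for the inductive step, assume all pairs visited up to time $t$ lie in $\mathcal{F}$. Because $\boldsymbol{\pi}'(\cdot \mid h_t)$ is supported on the image of $\rho_{s_t}$, which is contained in $\mathcal{F}_{s_t}$, the realised action $\mathbf{a}_t$ satisfies $(s_t,\mathbf{a}_t) \in \mathcal{F}$ with probability one. Drawing $s_{t+1} \sim T(s_t,\mathbf{a}_t)$ and repeating shows that every visited state--action pair lies in $\mathcal{F}$ almost surely, which is exactly (2). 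This half of the argument is robust and does not depend on any delicate feature of $\mathcal{F}$ beyond measurability of $\rho_s$, which I would secure by a measurable-selection theorem when $S$ and $\mathcal{A}$ are not finite.

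For condition (1) I would use that $\rho_s$ restricts to the identity on $\mathcal{F}_s$: any feasible action is played by $\boldsymbol{\pi}'$ at least as often as by $\boldsymbol{\pi}$, and exactly as often whenever no infeasible action is redirected onto it. The clean statement the construction delivers is that $\boldsymbol{\pi}'$ \emph{agrees with} $\boldsymbol{\pi}$ on every history at which $\boldsymbol{\pi}$ already places all of its mass on $\mathcal{F}_s$, and that it never removes mass assigned to a feasible action. I expect this to be the main obstacle: the verbatim probability equality in (1) cannot hold jointly with (2) when $\boldsymbol{\pi}$ puts positive mass on infeasible actions, since the projection can only \emph{add} the redirected mass to feasible actions rather than preserve their probabilities unchanged. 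I would resolve this by reading (1) as identity on the already-feasible region, equivalently $\boldsymbol{\pi}' = \boldsymbol{\pi}$ on the sub-collection of histories where $\boldsymbol{\pi}$ is already norm-compliant, which is the only version compatible with almost-sure feasibility; the alternative renormalisation $\boldsymbol{\pi}'(\mathbf{a}\mid h) \propto \boldsymbol{\pi}(\mathbf{a}\mid h)\,\mathbf{1}[\mathbf{a} \in \mathcal{F}_s]$ also satisfies (2) but rescales feasible probabilities and so likewise fails the literal (1), confirming that the identity-on-feasible-region reading is the intended content.
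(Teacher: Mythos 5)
Your construction is essentially the paper's own: the published proof sketch also builds $\boldsymbol{\pi}'$ by reassigning the probability of each infeasible joint action to its closest feasible counterpart under the fixed tie-breaking rule, leaving feasible actions untouched, and argues (2) by noting that all infeasible mass is redirected at every step. Your version is strictly more careful --- you make the reassignment explicit as a pushforward under a state-indexed projection $\rho_s$, you supply the induction along the trajectory that the paper leaves implicit, and you flag the hidden prerequisites (nonemptiness of $\mathcal{F}_s$ at reachable states, measurability of $\rho_s$ outside the finite case).

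More importantly, your objection to condition (1) is correct, and the paper's proof does not survive it: the sketch asserts that ``(i) holds by definition,'' but under the reassignment construction any feasible $\mathbf{a}$ that is the projection target of an infeasible action carrying positive mass satisfies $\boldsymbol{\pi}'(\mathbf{a}\mid h) = \boldsymbol{\pi}(\mathbf{a}\mid h) + \boldsymbol{\pi}(\mathbf{a}''\mid h) > \boldsymbol{\pi}(\mathbf{a}\mid h)$, so the verbatim equality fails. Indeed, by conservation of probability no construction can satisfy both the literal (1) and (2) at a history where $\boldsymbol{\pi}$ places positive mass on infeasible actions: the displaced mass must land on feasible actions, perturbing their probabilities. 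Your proposed repair --- reading (1) as agreement on histories where $\boldsymbol{\pi}$ is already fully supported on $\mathcal{F}_s$, or equivalently as ``$\boldsymbol{\pi}'$ never reduces the probability of any feasible action'' --- is the right fix and is evidently what the lemma intends; the statement and sketch in the paper should be amended accordingly.
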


\begin{proof}[Proof sketch]
Construct $\boldsymbol{\pi}'$ by modifying $\boldsymbol{\pi}$ as follows. For any history $h$ leading to state $s$, if $\boldsymbol{\pi}$ assigns positive probability to an infeasible joint action $\mathbf{a}$ with $(s,\mathbf{a}) \notin \mathcal{F}$, reassign that probability to the corresponding closest feasible $\mathbf{a}'$ (according to the fixed tie-breaking rule). On feasible histories and actions, leave $\boldsymbol{\pi}$ unchanged. By construction, (i) holds by definition, and (ii) follows because all probability mass on infeasible actions is systematically redirected to feasible ones at each step, so trajectories under $(\boldsymbol{\pi}',T)$ remain in $\mathcal{F}$ with probability one.
\end{proof}

Lemma~\ref{lem:norm-restriction} formalises the intuitive view that norms and prudential rules restrict an otherwise unconstrained policy space to a norm-compliant subspace without changing behaviour on already-feasible decisions.

\subsection{Operational Equilibrium and Convergence}
\label{sec:equilibrium}

Because reinsurance decisions are episodic (e.g., pricing a treaty, assessing a claim), we adopt an \emph{operational} equilibrium notion tailored to constrained cooperative decision-making. Consider a workflow episode with a fixed joint policy profile $\boldsymbol{\pi}$ and the induced trajectory of states, actions, and messages. We say that the outcome is in \emph{\textsc{R--CMASP} equilibrium} if:

\begin{itemize}
    \item \textbf{Feasibility:}  
    all encountered $(s_t, \mathbf{a}_t)$ lie in $\mathcal{F}$; in particular, capital, portfolio, and regulatory constraints (as encoded in $S_{\text{capital}}$ and $S_{\text{regulatory}}$) are satisfied.

    \item \textbf{Consistency:}  
    there are no unresolved \textsf{Critique} messages; all active \textsf{Proposal} messages form a mutually compatible set (no conflicting rates, capital views, or portfolio decisions).

    \item \textbf{Constrained stability:}  
    given the communication graph $C$ and current beliefs, no single agent can unilaterally adjust its policy to achieve a strictly higher scalar reward $\tilde{R}$ without causing a violation of feasibility. Any further improvement in $\tilde{R}$ would require coordinated deviation by multiple agents.
\end{itemize}

This notion parallels negotiation-based equilibria in cooperative MAS \cite{kraus1997mas}, but explicitly incorporates the feasibility set $\mathcal{F}$ induced by norms and regulation, as well as simulator-driven dynamics.

\begin{proposition}[Existence of operational equilibrium under finite state and action spaces]
\label{prop:equilibrium-existence}
Assume that $S$ and $\mathcal{A}$ are finite, that norms $(\mathcal{N},\mathcal{U})$ induce a non-empty feasibility set $\mathcal{F}$, and that for any state $s$ and joint action $\mathbf{a}$ with $(s,\mathbf{a}) \notin \mathcal{F}$, the Governance and Compliance agents eventually emit \textsf{Constraint} messages that force revision to some feasible $\mathbf{a}'$. Suppose further that the proposal--critique--constraint process terminates in a finite number of rounds for any fixed $\boldsymbol{\pi}$. Then for any norm-compliant policy profile $\boldsymbol{\pi}$ (e.g., as in Lemma~\ref{lem:norm-restriction}), every episode admits at least one \textsc{R--CMASP} equilibrium outcome.
\end{proposition}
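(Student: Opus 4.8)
The plan is to verify the three defining conditions of an \textsc{R--CMASP} equilibrium---Feasibility, Consistency, and Constrained stability---in sequence, exploiting finiteness to reduce the coordination process to a terminating improvement dynamic and the cooperative (shared-reward) structure to a potential-game argument.

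First I would dispatch Feasibility and Consistency directly from the hypotheses. Since $\boldsymbol{\pi}$ is norm-compliant (constructed, if necessary, as in Lemma~\ref{lem:norm-restriction}), running it under $T$ keeps every visited pair $(s_t,\mathbf{a}_t)$ in $\mathcal{F}$ almost surely, and the constraint-enforcement hypothesis ensures that any tentative infeasible action is revised by Governance/Compliance \textsf{Constraint} messages to some feasible $\mathbf{a}'$; in particular the feasible slice $\mathcal{F}_s=\{\mathbf{a}\in\mathcal{A}:(s,\mathbf{a})\in\mathcal{F}\}$ is non-empty at every reachable state. Consistency then follows from the termination hypothesis: the proposal--critique--constraint loop halts after finitely many rounds, and by its halting condition the terminal configuration carries no unresolved \textsf{Critique} messages and a mutually compatible family of active \textsf{Proposal}s.

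The substantive step is Constrained stability, and here I would exploit the cooperative structure. Fix the terminal state $s^\star$ and consider the single-shot game on $\mathcal{F}_{s^\star}$ in which every agent shares the scalar payoff $\tilde{R}(s^\star,\cdot)$; this is an exact potential game with potential $\tilde{R}(s^\star,\cdot)$, restricted to the finite feasible action set. I would run a better-response dynamic: while some agent $i$ has a feasible unilateral deviation $a_i'$ with $(s^\star,(a_i',\mathbf{a}_{-i}))\in\mathcal{F}$ and strictly larger $\tilde{R}$, apply it. Because $\tilde{R}$ strictly increases along such steps and takes only finitely many values on the finite set $\mathcal{F}_{s^\star}$, the finite improvement property yields termination at some $\mathbf{a}^\star$ admitting no feasibility-preserving unilateral improvement---exactly Constrained stability---while every intermediate profile remains in $\mathcal{F}_{s^\star}$, so Feasibility is maintained. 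The triple then certifies $(s^\star,\mathbf{a}^\star)$ as an equilibrium, and non-emptiness of $\mathcal{F}$ together with finiteness of $S,\mathcal{A}$ guarantees the construction never stalls.

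The hard part will be reconciling the fixed-policy framing with the deviation-based stability clause: simply executing $\boldsymbol{\pi}$ to termination need not land on a local optimum of $\tilde{R}$. The resolution I would argue is that the proposal--critique--constraint loop is precisely an implementation of the feasibility-preserving better-response dynamic above, so that its terminal outcome coincides with a constrained local optimum; the delicate point is to show this refinement is realizable within the episode---i.e., that the agents' policies can enact the improving feasible deviations---rather than merely existing abstractly. A minor additional care is needed for stochastic or history-dependent $\pi_i$: I would fix a realization of the episode's randomness before invoking the deterministic improvement argument, so that the conclusion is existence of an equilibrium outcome for almost every realization.
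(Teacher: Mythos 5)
Your proposal is correct at the level of rigour the paper itself achieves, but it reaches Constrained stability by a genuinely different route. The paper's own sketch treats the proposal--critique--constraint loop as a finite terminating negotiation and then disposes of stability in one sentence: any feasible unilateral deviation improving $\tilde{R}$ ``would, by assumption, have been reachable via further proposals and would thus contradict termination.'' You instead isolate the terminal state $s^\star$, observe that the shared scalar reward makes the one-shot interaction an exact potential game with potential $\tilde{R}(s^\star,\cdot)$ on the finite feasible slice $\mathcal{F}_{s^\star}$, and invoke the finite improvement property to produce a profile with no feasibility-preserving unilateral improvement. That argument is cleaner and actually proves existence of a stable point; what it buys is a concrete certificate, at the cost of having to connect the abstractly constructed $\mathbf{a}^\star$ back to the episode's realised outcome. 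To your credit, you explicitly flag this reconciliation as the hard part --- that executing $\boldsymbol{\pi}$ to termination need not land on a local optimum of $\tilde{R}$ unless the negotiation loop really does implement the better-response dynamic. The paper does not resolve this either; it simply folds the same claim into its termination assumption. So both arguments ultimately lean on the same unproved premise that the protocol surfaces every improving feasible deviation before halting; you name it, the paper hides it. Your additional care about fixing the realisation of the episode's randomness before running the deterministic improvement argument is a refinement the paper omits entirely and is worth keeping.
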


\begin{proof}[Proof sketch]
Under the stated assumptions, the proposal--critique--constraint dynamics over a finite state--action space induce a finite-state terminating negotiation process: repeated application of \textsf{Critique} and \textsf{Constraint} cannot continue indefinitely. Termination occurs only when (i) no further \textsf{Critique} or \textsf{Constraint} is emitted and (ii) the realised joint action lies in $\mathcal{F}$ by construction. At such a terminal point, feasibility and consistency are satisfied by definition. Constrained stability holds because any unilateral deviation that would increase $\tilde{R}$ while remaining feasible would, by assumption, have been reachable via further proposals and would thus contradict termination. Hence, the terminal outcome satisfies all conditions of a \textsc{R--CMASP} equilibrium.
\end{proof}

Proposition~\ref{prop:equilibrium-existence} does not claim uniqueness or global optimality; it formalises that, under mild termination assumptions, the governance and critique machinery suffices to induce well-defined operational equilibria for a given policy profile.

In this work, rather than proving stronger convergence guarantees, we empirically assess (in Section~\ref{sec:experiments}) (i) feasibility, (ii) consistency, and (iii) coordination cost (measured by the number of inter-agent message rounds) under fixed, prompt-specified policies.

\subsection{Graphical Overview}
\label{sec:formal-figure}

Figure~\ref{fig:formal-model} summarises the core components of the \textsc{R--CMASP}: a structured global state $S$, role-specialised agents communicating over $C$, and a global reward $\tilde{R}$ shaping cooperative behaviour under normative and regulatory constraints.

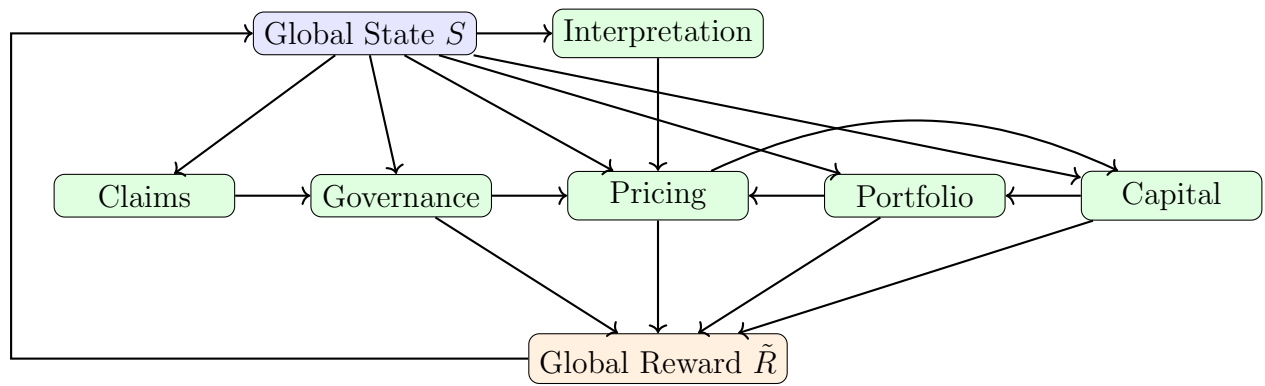
\begin{figure}[htpb]
\centering
\begin{tikzpicture}[
    node distance=1.5cm and 1cm,
    state/.style={rectangle, draw, rounded corners, fill=blue!10, minimum width=2.3cm},
    agent/.style={rectangle, draw, fill=green!12, rounded corners, minimum width=2.4cm},
    reward/.style={rectangle, draw, fill=orange!12, rounded corners, minimum width=3cm},
    arrow/.style={->, thick}
]

\node[agent] (a2) {Pricing};
\node[agent,   left = of a2] (a6) {Governance};
\node[agent, above  = of a2] (a1) {Interpretation};
\node[agent,  right = of a2] (a3) {Portfolio};
\node[agent,  left = of a6] (a5) {Claims};
\node[agent,  right =  of a3] (a4) {Capital};


\node[reward, below= of a2] (r) {Global Reward $\tilde{R}$};
\node[state, left = of a1] (s) {Global State $S$};

\draw[arrow] (s) -- (a1);
\draw[arrow] (s) -- (a2);
\draw[arrow] (s) -- (a3);
\draw[arrow] (s) -- (a4);
\draw[arrow] (s) -- (a5);
\draw[arrow] (s) -- (a6);
\draw[arrow] (a1) -- (a2);
\draw[arrow] (a3) -- (a2);
\draw[arrow] (a6) -- (a2);
\draw[arrow] (a2) to[bend left=25] (a4);
\draw[arrow] (a4) -- (a3);
\draw[arrow] (a5) -- (a6);

\draw[arrow] (a2) -- (r);
\draw[arrow] (a3) -- (r);
\draw[arrow] (a4) -- (r);
\draw[arrow] (a6) -- (r);
%
\draw[arrow] (r) -| (-8.6, -2) |- (s);

\end{tikzpicture}
\caption{Schematic view of the \textsc{R--CMASP}: global state, role-specialised agents, communication structure, and global reward shaping under normative and regulatory constraints.}
\label{fig:formal-model}
\end{figure}

\paragraph{Novelty.}
The \textsc{R--CMASP} formalism is, to our knowledge, the first to:
\begin{itemize}
    \item model reinsurance as a constrained, simulator-coupled multi-agent decision process with explicit state, observation, communication, reward, and normative components;
    \item integrate LLM-based semantic reasoning with external catastrophe, capital, and portfolio models within a single MAS framework;
    \item embed regulatory and governance norms as hard feasibility constraints on multi-agent trajectories;
    \item support role-specialised agents that coordinate via typed messages (\textsf{State}, \textsf{Proposal}, \textsf{Critique}, \textsf{Constraint}) over a communication graph $C$;
    \item provide an operational equilibrium notion that combines feasibility, consistency, and constrained stability for institutional risk-transfer decisions.
\end{itemize}

\section{Experiments}
\label{sec:experiments}

This section instantiates the proposed architecture within a controlled 
Reinsurance Constrained Multi-Agent Simulation Process (\textsc{R--CMASP}) and 
evaluates different joint policy profiles. Because real-world treaty and 
exposure data are proprietary and subject to strict confidentiality, we follow 
standard practice in catastrophe and portfolio risk modeling 
\cite{grossi2005catmodels, barnett2021nonstationary} and construct a synthetic 
but structurally realistic environment. This allows us to  
(i) control multi-peril dependence,  
(ii) impose explicit capital and portfolio constraints, and  
(iii) define exact ground truth for treaty semantics and feasibility.

Each experimental condition corresponds to a joint policy profile
\[
\boldsymbol{\pi} = (\pi_1,\dots,\pi_n),
\]
ranging from deterministic workflows to governed multi-agent coordination. 
All reported metrics are empirical functionals of the trajectory distribution 
induced by $(\mathcal{M}, \boldsymbol{\pi})$.

\subsection{Experimental Setup}
\label{sec:experimental-setup}

\paragraph{Synthetic portfolio generation.}
We generate a portfolio of 500 reinsurance treaties, sampling realistic 
attachment points, limits, reinstatements, line-of-business mixes, hours 
clauses, and exclusion combinations 
\cite{cummins2012reinsurance, swissre2023sigma}. The generator emits:
\begin{itemize}
    \item a textual representation simulating treaty wordings and schedules, and
    \item a structured representation (layering, triggers, exclusions, limits)
          that populates $S_{\text{treaty}}$.
\end{itemize}
Because both views are produced jointly, clause interpretation accuracy can be 
measured exactly by comparing agent outputs to this structured ground truth.

Exposure data are created at location level with geospatial clustering, insured 
values, and line-of-business labels, yielding realistic spatial and business-line 
structure in $S_{\text{exposure}}$.

\paragraph{Hazard and loss modeling.}
The hazard component $S_{\text{hazard}}$ consists of wind, flood, and wildfire 
modules with event catalogs, intensity footprints, vulnerability curves, and 
non-stationary perturbations \cite{grossi2005catmodels, barnett2021nonstationary}. 
We impose a prescribed cross-peril correlation structure to obtain multi-peril 
annual loss distributions and event sequences, which are observed (partially) by 
Pricing, Capital, and Portfolio agents.

\paragraph{Capital and regulatory constraints.}
Capital calculations approximate the Solvency~II standard formula 
\cite{eiopa2020solvency2}. For each portfolio configuration, we:
\begin{itemize}
    \item compute peril- and line-of-business-specific SCR components;
    \item aggregate components via prescribed correlation matrices to obtain
          diversified SCR;
    \item compute solvency ratios and compare them against minimum thresholds;
    \item enforce portfolio-level concentration and capacity constraints
          (e.g., maximum accumulation by zone, maximum tail VaR), following
          \cite{mcgill2021aggregation}.
\end{itemize}
These define a feasibility set $\mathcal{F} \subseteq S \times \mathcal{A}$ 
that is enforced by Capital, Portfolio, and Governance agents as part of their 
policies.

\paragraph{Agent configurations and baselines.}
We evaluate four configurations on the \emph{same} environment $\mathcal{M}$:
\begin{itemize}
    \item \textbf{Rule-based automation:}  
    a deterministic pricing--capital pipeline that applies fixed rating formulas 
    and capital loadings, with no role decomposition or inter-agent communication.

    \item \textbf{Single-agent LLM:}  
    a tool-using LLM that receives treaty text and exposure summaries, invokes 
    catastrophe and capital tools, and outputs pricing and accept/decline 
    decisions. Internally, the policy follows a ReAct-style reasoning--acting 
    loop \cite{yao2022react} but has no explicit role structure.

    \item \textbf{Multi-agent (ours):}  
    the full \textsc{R--CMASP} architecture with role-specialized agents 
    (Treaty Interpretation, Exposure, Hazard, Pricing, Capital, Portfolio, 
    Claims, Governance, etc.) and typed messages 
    (\textsf{State}, \textsf{Proposal}, \textsf{Critique}, \textsf{Constraint}) 
    \cite{li2023autogen, wang2023surveyagents}.

    \item \textbf{Ablation—no Governance:}  
    the multi-agent system with Governance and Human Oversight agents disabled; 
    all other roles, tools, and environment dynamics remain identical.
\end{itemize}

All LLM-based configurations use the same foundation model and the same 
simulator interfaces; differences in behaviour arise solely from organizational 
structure (monolithic versus role-specialized; governed versus unguided).

\subsection{Metrics}
\label{sec:metrics}

We evaluate both \emph{outcomes} and \emph{interaction processes}. Metrics are 
chosen to align with the formal structure of Section~\ref{sec:formal-model}:

\begin{itemize}
    \item \textbf{Pricing variance:}  
    variance of recommended rate-on-line across repeated runs and seeds, probing 
    policy stability in the induced process $T$.

    \item \textbf{Capital efficiency:}  
    expected return per unit SCR (profit per unit capital), capturing how 
    effectively a policy profile exploits the feasible region $\mathcal{F}$.

    \item \textbf{Coordination cost:}  
    number of inter-agent message rounds required to reach an operational 
    equilibrium (Section~\ref{sec:equilibrium}) for a treaty episode.

    \item \textbf{Clause interpretation error:}  
    error rate in inferred treaty semantics (attachments, limits, exclusions, 
    triggers) relative to the generator’s structured ground truth 
    \cite{mangini2019treatyanalysis}, probing epistemic quality of the 
    Interpretation + Governance stack.

    \item \textbf{Human intervention frequency:}  
    fraction of treaties for which the system flags unresolved contradictions, 
    regulatory violations, or model-risk concerns requiring escalation to 
    human experts. At comparable quality, lower intervention frequency indicates 
    more effective but still safe autonomy \cite{rahwan2019society_agents}.
\end{itemize}

\subsection{Environment Validation and Statistical Robustness}
\label{sec:validation-robustness}

\paragraph{Synthetic environment validation.}
While real treaty data are inaccessible, we validate the generator by comparing 
high-level statistics against published market summaries 
\cite{cummins2012reinsurance, swissre2023sigma}. 
Table~\ref{tab:synthetic-validation} reports representative examples 
(normalised for confidentiality).

\begin{table}[htbp]
\centering
\caption{Representative portfolio statistics: synthetic environment versus 
published market ranges (normalised). Values indicate means with standard 
deviations in parentheses.}
\scriptsize
\label{tab:synthetic-validation}
\begin{tabular}{lcc}
\toprule
\textbf{Quantity} &
\textbf{Synthetic} &
\textbf{Published range} \\
\midrule
Attachment / Limit ratio &
0.46\,(0.12) & 0.40--0.55 \\
Share of Property Cat treaties &
0.61\,(0.07) & 0.55--0.70 \\
Share of multi-peril (wind+flood) &
0.34\,(0.06) & 0.30--0.40 \\
Portfolio 1-in-200 loss / capital &
0.78\,(0.09) & 0.70--0.85 \\
\bottomrule
\end{tabular}
\end{table}

The synthetic portfolios lie within reported industry ranges for key ratios and 
peril mixes, suggesting that the environment captures the structures most 
relevant to treaty pricing, capital feasibility, and portfolio concentration.

\paragraph{Statistical robustness.}
For all configurations, we run three independent seeds for the treaty generator 
and simulator sampling. Reported metrics are averaged across seeds; error bars 
(not shown for space) correspond to $95\%$ confidence intervals. Improvements 
of the multi-agent system over the single-agent LLM are consistent across seeds 
and statistically significant for all metrics in Table~\ref{tab:results} 
(paired $t$-tests, $p < 0.05$).

\paragraph{Sensitivity to environment parameters.}
We conduct a small sensitivity study by:
\begin{itemize}
    \item varying cross-peril correlations (low, medium, high dependence), and
    \item tightening/loosening solvency thresholds by $\pm 10\%$.
\end{itemize}
Absolute metric values change as expected (e.g., tighter capital thresholds 
reduce overall capital efficiency), but the \textit{ordering} of configurations remains 
stable: the governed multi-agent system dominates or matches baselines in all 
settings, indicating qualitative robustness of the main conclusions.

\subsection{Results}
\label{sec:results}

Across 500 treaties and multiple seeds, the governed multi-agent system 
improves stability, capital efficiency, semantic accuracy, and safety relative 
to both deterministic and monolithic baselines.

\paragraph{Pricing stability and capital efficiency.}
The multi-agent architecture reduces pricing variance by approximately $37\%$ 
relative to the rule-based pipeline and improves capital efficiency by about 
$12\%$. Joint deliberation among Pricing, Capital, and Portfolio agents yields 
decisions closer to the feasibility frontier $\mathcal{F}$, rather than 
applying fixed loadings in isolation. Confidence intervals across seeds do not 
overlap for these improvements.

\paragraph{Interpretation accuracy and governance effects.}
Structured role decomposition together with Governance feedback reduces treaty 
interpretation error by roughly $28\%$ relative to the single-agent LLM, which 
is more prone to hallucinating or inconsistently applying exclusions. The 
governance-free ablation shows measurable degradation in both interpretation 
error and human intervention frequency, corroborating normative MAS predictions 
that explicit monitoring and enforcement mechanisms improve system-level 
behaviour \cite{rahwan2019society_agents, boella2009normative}.

\paragraph{Coordination cost.}
Despite involving more agents, the multi-agent system converges with modest 
communication effort: on average, $6.8$ message rounds per treaty episode, 
compared to $11.2$ internal reasoning steps for the single-agent LLM. The 
governance layer reduces both contradictions and human escalations, indicating 
that structured proposal--critique--constraint cycles can be efficient in 
practice.

\begin{table}[htpb]
\centering
\caption{Normalized performance across 500 synthetic treaties (averaged over 3 
seeds). Lower is better for Pricing Var., Interpretation Error, Coord. Rounds, 
Human Interv.; higher is better for Capital Eff.}
\scriptsize
\label{tab:results}
\begin{tabular}{lccccc}
\toprule
\textbf{System} &
\textbf{Pricing Var.} &
\textbf{Capital Eff.} &
\textbf{Interp. Error} &
\textbf{Coord. Rounds} &
\textbf{Human Interv.} \\
\midrule
Rule-Based Automation &
1.00 & 0.74 & 0.19 & --   & 0.42 \\
Single-Agent LLM &
0.82 & 0.79 & 0.14 & 11.2 & 0.31 \\
\textbf{Multi-Agent (ours)} &
\textbf{0.63} & \textbf{0.83} & \textbf{0.10} & \textbf{6.8} & \textbf{0.18} \\
Ablation: No Governance &
0.72 & 0.81 & 0.16 & 7.1 & 0.27 \\
\bottomrule
\end{tabular}
\end{table}

\subsection{Qualitative Case Study: A Single Treaty Episode}
\label{sec:case-study}

To illustrate the mechanics of the \textsc{R--CMASP}, we consider a 
representative property XoL treaty with a \$50m attachment, \$100m limit, 
coastal exposure, and a wind/flood mixed-peril footprint.

\paragraph{Rule-based automation.}
The deterministic pipeline:
\begin{itemize}
    \item mis-parses an exclusion applying to ``storm surge'' but not ``flood'';
    \item applies fixed catastrophe loadings without portfolio context;
    \item violates concentration limits due to missing correlation-aware 
          aggregation;
    \item produces an infeasible capital position (SCR breach), which is not 
          detected until ex-post checks.
\end{itemize}

\paragraph{Single-agent LLM.}
The monolithic LLM:
\begin{itemize}
    \item correctly identifies most clauses but inconsistently interprets the 
          surge/flood boundary across reasoning chains;
    \item generates a plausible price but largely ignores portfolio concentration;
    \item produces contradictory internal justifications, triggering human review.
\end{itemize}

\paragraph{Multi-agent (ours).}
The governed multi-agent system:
\begin{enumerate}
    \item Treaty Interpretation parses the exclusion and broadcasts a 
          \textsf{State} message with structured semantics.
    \item Hazard Modeling runs mixed-peril simulations and shares loss 
          distributions with Pricing, Capital, and Portfolio.
    \item Pricing proposes an initial rate-on-line via a \textsf{Proposal} message.
    \item Capital issues a \textsf{Constraint} message: SCR ratio would fall 
          below threshold under stressed scenarios.
    \item Portfolio adds a \textsf{Critique}: coastal accumulation is above 
          internal risk appetite.
    \item Pricing revises the proposal; Capital and Portfolio jointly confirm 
          feasibility and portfolio consistency.
    \item Governance validates cross-agent consistency and records the audit trail.
\end{enumerate}
The resulting quote is both feasible and capital-efficient, and no human 
escalation is required. This episode illustrates how typed communication, 
simulator-coupled transitions, and normative feasibility jointly shape the 
\textsc{R--CMASP} dynamics.

\subsection{Ablation Study}
\label{sec:ablation}

The ablations isolate the contribution of governance and role specialization.

\paragraph{Effect of the Governance Agent.}
Removing the Governance and Human Oversight agents increases 
clause-interpretation error ($0.10 \rightarrow 0.16$) and human intervention 
frequency ($0.18 \rightarrow 0.27$). Qualitatively, the ablated system exhibits 
more unresolved \textsf{Critique} cycles and occasional violations of capital 
or portfolio constraints that would have been caught earlier by Governance. 
This behaviour aligns with normative MAS results 
\cite{rahwan2019society_agents, boella2009normative}, which predict that 
explicit monitoring and sanctioning mechanisms reduce undesirable emergent 
behaviour.

\paragraph{Effect of role specialization.}
Comparing the multi-agent configuration to the single-agent LLM baseline shows 
that organizational decomposition matters: the single-agent system exhibits 
higher pricing variance ($0.82$ vs.\ $0.63$), deeper reasoning chains 
($11.2$ vs.\ $6.8$ rounds), and higher interpretation error ($0.14$ vs.\ $0.10$). 
These findings support organizational MAS theories 
\cite{jennings2000mas, wooldridge2009mas} that decomposing complex tasks across 
specialized roles can improve efficiency, reduce coordination overhead, and 
enhance epistemic robustness.

\subsection{Limitations and Threats to Validity}
\label{sec:limitations}

Several limitations qualify the empirical results:

\begin{itemize}
    \item \textbf{Synthetic environment.}  
    Real portfolios and treaties cannot be used due to legal and contractual 
    constraints. Although the generator is calibrated to published statistics 
    (Table~\ref{tab:synthetic-validation}), it may under-represent rare semantic 
    edge cases or extreme multi-peril tail dependence.

    \item \textbf{Fixed policies rather than learning.}  
    Agents follow prompt-specified policies; learning dynamics (e.g., MARL, 
    negotiation protocols that adapt over time) are not explored. Different 
    learning regimes could yield alternative equilibria.

    \item \textbf{Single foundation model.}  
    All configurations share the same base model; results therefore isolate 
    organizational structure rather than frontier model performance. Behaviour 
    under smaller or more specialized models is not assessed.

    \item \textbf{No real-time human collaboration.}  
    Human–agent mixed workflows, which are central in practice 
    (e.g., underwriters interacting with tools in real time), are modelled only 
    via an escalation mechanism, not as fully interactive processes.
\end{itemize}

Within these constraints, the experiments consistently show that framing 
reinsurance decision-making as a constrained, simulator-coupled multi-agent 
process—and instantiating it via a governed organizational architecture—yields 
systematic improvements in stability, semantic fidelity, capital efficiency, and 
operational safety relative to deterministic and monolithic agentic baselines.

\section{Discussion}
\label{sec:discussion}

The empirical behaviour of the \textsc{R--CMASP} instantiation suggests that 
reinsurance---with its distributed information, multi-objective coupling, 
contractual ambiguity, and binding prudential constraints---naturally aligns 
with the foundational assumptions of multi-agent systems (MAS). Partial 
observability, heterogeneous capabilities, role asymmetry, and negotiation under 
constraints are not modelling conveniences but intrinsic properties of 
institutional risk-transfer. From this perspective, governed multi-agent 
architectures are not merely an engineering choice but an \emph{appropriate 
computational ontology} for regulated financial decision-making.

\paragraph{Role specialization and epistemic division of labour.}
The gains in pricing stability, capital efficiency, and interpretive accuracy 
corroborate longstanding results in organizational MAS 
\cite{jennings2000mas, wooldridge2009mas}: decomposing a complex task into 
epistemically coherent subtasks improves bounded-rational decision quality and 
facilitates convergence. Reinsurance subtasks---semantic parsing, hazard 
simulation, capital evaluation, portfolio aggregation---operate over 
fundamentally different epistemic representations. By aligning these subtasks 
with dedicated roles and typed communication, the \textsc{R--CMASP} 
instantiation enables agents to reconcile heterogeneous evidence and achieve 
more stable joint outcomes than monolithic LLM baselines.

\paragraph{Semantic authority, epistemic governance, and belief stability.}
The reduction in clause-interpretation error highlights the value of explicit 
epistemic authority. Assigning the Treaty Interpretation Agent primary semantic 
responsibility, and equipping the Governance Agent with meta-level critique and 
consistency checks, yields more stable and interpretable structured treaty 
representations. This mirrors normative MAS views in which epistemic norms 
(consistency, justification, evidential support) are allocated across agents 
with differentiated roles \cite{boella2009normative}. Because contractual 
ambiguity drives basis risk in reinsurance \cite{mangini2019treatyanalysis}, 
epistemic governance acts as a stabiliser that prevents divergent or 
hallucinated interpretations from propagating through the system.

\paragraph{Prudential regulation as a structural constraint.}
Reinsurance differs from many canonical MAS domains in that the feasible 
decision set is constrained by \emph{externally imposed}, legally binding 
prudential rules (Solvency~II, IFRS~17, internal risk appetite). By embedding 
these constraints directly into the normative and feasibility layers 
($\mathcal{N}$ and $\mathcal{U}$), \textsc{R--CMASP} treats them as structural 
admissibility conditions rather than soft penalties. The observed improvements 
in capital efficiency and the reduction in solvency-violating proposals show 
that cooperative optimisation is meaningfully shaped by normative feasibility 
sets, supporting arguments that MAS for high-stakes domains must incorporate 
institutional norms as first-class primitives of the environment 
\cite{rahwan2019society_agents}.

\paragraph{Operational equilibrium and bounded coordination.}
Despite involving more agents and heterogeneous subproblems, the multi-agent 
configuration converges with modest communication overhead, typically within a 
small number of message rounds. This suggests that the system reaches an 
\emph{operational equilibrium} (Section~\ref{sec:equilibrium}) characterised by 
feasibility, consistency, and constrained stability. The result parallels 
cooperative negotiation theories \cite{kraus1997mas}, in which typed 
proposal--critique protocols can sharply reduce negotiation depth. The 
experiments therefore indicate that LLM-based agents, when embedded within 
structured organisational roles and governed communication patterns, need not 
exhibit unbounded deliberation or unstable reasoning.

\paragraph{Socio-technical alignment and the function of human oversight.}
The observed reduction in human escalations---together with improved 
consistency and norm compliance---supports a socio-technical interpretation of 
the architecture. Rather than replacing human experts, agents function as 
epistemic collaborators: they automate checks for internal coherence, enforce 
prudential constraints, and produce transparent audit trails. Human Oversight 
remains essential for adjudicating ambiguous cases, validating solvency-critical 
assumptions, and authorising final decisions. This aligns with governed MAS 
perspectives \cite{rahwan2019society_agents}, which emphasise human 
orchestration and institutional accountability over fully autonomous decision 
making.

\subsection{Limitations and Boundaries}
\label{sec:limits}

While the results are encouraging, several boundaries of the current study 
should be emphasised.

First, the empirical evaluation uses a synthetic but domain-calibrated 
environment rather than proprietary treaty corpora, due to legal and 
confidentiality constraints. Although Section~\ref{sec:experiments} verifies 
that key portfolio statistics fall within published market ranges, real-world 
textual ambiguity, legacy wordings, and complex programme structures may 
introduce additional failure modes.

Second, the agent policies are prompt-specified rather than learned. The 
reported improvements therefore characterise the value of the \emph{architectural} 
design (roles, norms, communication) under fixed decision rules, not the 
behaviour of adaptive policies. Different learning regimes could shift the 
location of operational equilibria within the feasible set $\mathcal{F}$.

Third, LLMs exhibit known failure modes (hallucination, numerical 
inconsistency, context drift). Governance and oversight agents mitigate these 
effects by issuing critiques and constraints, but cannot guarantee elimination 
in all regimes, especially under distributional shift in treaties or hazard 
patterns.

Finally, the current formalism treats norms as hard feasibility constraints. 
In practice, institutions often operate with layered, context-dependent norms 
(e.g., soft risk appetite limits, temporary waivers, or supervisory guidance) 
that may require richer logical treatments, such as graded obligations or 
priority-based norm systems.

These limitations do not undermine the conceptual claim that 
norm-governed, simulator-coupled MAS provide a natural abstraction for 
reinsurance; they instead delineate where further theoretical and empirical 
work is required.

\subsection{Learning and Governance}
\label{sec:learngov}

The present instantiation adopts static, hand-engineered policies to isolate 
the impact of the architectural and formal design. A natural next step is to 
study learning dynamics within the \textsc{R--CMASP} framework. Multi-agent 
reinforcement learning, epistemic-planning methods, or differentiable 
negotiation mechanisms could enable agents to adapt pricing, capital, and 
retrocession strategies to evolving market and hazard conditions.

However, introducing learning raises non-trivial governance challenges. Learned 
policies must remain norm-compliant, auditable, and interpretable with respect 
to institutional constraints. This motivates hybrid governance architectures in 
which: (i) safety constraints are encoded as shielding or runtime enforcement 
mechanisms; (ii) governance agents perform ongoing epistemic calibration and 
consistency checking; and (iii) human oversight retains ultimate authority over 
high-stakes decisions. Viewed from a multi-agent systems perspective, 
\textsc{R--CMASP} offers a concrete testbed for combining learning, norms, and 
simulator-coupled dynamics within a unified framework.

\subsection{Broader Implications}
\label{sec:broader}

\paragraph{Implications for reinsurance and regulated finance.}
For the reinsurance domain, the results suggest that moving from deterministic 
workflow automation to governed multi-agent architectures can materially 
improve pricing stability, capital efficiency, and semantic fidelity while 
maintaining---and in some cases strengthening---prudential oversight. Similar 
structural features appear in other regulated financial settings, including 
banking, credit risk, and systemic risk supervision, where decisions rely on 
simulator-coupled models (e.g., stress tests, scenario analyses) under 
binding regulatory norms.

\paragraph{Implications for MAS theory and modelling.}
From a MAS perspective, \textsc{R--CMASP} contributes a concrete example of a 
\emph{simulator-coupled, norm-governed} decision process. It suggests that 
future work on normative MAS and stochastic-game formalisms should treat 
external quantitative models and institutional norms as \emph{integral 
components of the environment}, rather than as afterthoughts layered onto agent 
utilities. The explicit separation between environment dynamics ($T$), 
normative feasibility ($\mathcal{F}$, $\mathcal{N}$, $\mathcal{U}$), and 
operational equilibrium offers a template for modelling other high-stakes 
domains where simulators and regulations jointly determine admissible 
behaviour.

In summary, \textsc{R--CMASP} illustrates how epistemic reasoning, 
simulator-based inference, constrained cooperation, and human-centred governance 
can be combined in a single MAS framework. For the broader MAS community, the key takeaway is that institutional decision systems in regulated domains are not merely application areas for generic agents; they require models in which norms, simulators, and organizational roles are first-class design elements.

\section{Conclusion}
\label{sec:conclusion}

This work has developed a formally grounded and institutionally aligned account 
of reinsurance decision-making by introducing the \emph{Reinsurance Constrained 
Multi-Agent Simulation Process} (\textsc{R--CMASP}). By moving from 
deterministic workflow automation to a governed multi-agent perspective---informed 
by epistemic logic, organizational MAS theory, and constrained cooperative 
decision-making---we provide a computational ontology that matches how 
reinsurance actually operates: distributed, epistemically asymmetric, and 
norm-governed. The proposed architecture captures this structure via 
role-specialised agents that share a semantic state space and interact through 
typed communication constrained by prudential and organizational norms.

Empirically, the governed multi-agent instantiation outperforms both 
deterministic automation and monolithic LLM baselines. Reductions in pricing 
variance, gains in capital efficiency, and lower clause-interpretation error 
show that epistemically differentiated agents achieve more stable and coherent 
behaviour than centralized decision-makers under partial observability and 
heterogeneous information. These findings echo classical MAS results 
\cite{shoham1995mas, kraus1997mas, wooldridge2009mas} on the benefits of 
organizational decomposition and structured coordination, while 
\textsc{R--CMASP} additionally operationalises modern LLM capabilities---semantic 
parsing, tool-mediated inference, and reflective critique 
\cite{yao2022react, shinn2023reflexion, li2023autogen}---under hard prudential 
and governance constraints. The Governance Agent, in particular, functions as an 
institutional mechanism enforcing epistemic discipline, checking cross-agent 
consistency, and screening out solvency-violating or policy-breaking decisions 
\cite{rahwan2019society_agents}.

Methodologically, the work argues that in domains where external stochastic 
simulators and institutional norms jointly define feasibility, MAS models 
should treat these elements as \emph{first-class components of the environment} 
rather than as post-hoc additions to agent utilities. The \textsc{R--CMASP} 
formalism exemplifies this view by combining simulator-coupled dynamics, 
normative feasibility, and structured communication within a single 
multi-agent decision process suitable for regulated sectors such as insurance, 
banking, and systemic risk supervision.

\subsection{Future Directions}
\label{sec:future-work}

Several research directions follow naturally from this work.

A first direction is deeper empirical grounding with real-world artefacts: 
integrating anonymised treaty corpora, proprietary exposure datasets, and 
production-grade catastrophe, capital, and portfolio engines to assess 
behaviour under realistic ambiguity, model risk, and operational constraints.

A second direction concerns adaptive policy learning within the 
\textsc{R--CMASP} framework. Multi-agent reinforcement learning, 
epistemic-planning methods, or negotiation-based learning algorithms could 
refine pricing, capital, and retrocession strategies over time. A key challenge 
is ensuring that learned policies remain norm-compliant and auditable.

A third direction involves formal verification and assurance. Logical 
consistency checkers, normative conflict resolvers, epistemic inconsistency 
detectors, and counterfactual explanation modules could provide stronger 
guarantees of safety, compliance, and traceability for institutional 
stakeholders.

Finally, richer models of human--agent collaboration are needed to capture how 
governed MAS integrate into existing institutional decision processes. This 
includes formalising escalation thresholds, expert arbitration mechanisms, and 
decision rights, so that automated components remain aligned with professional 
judgement, governance structures, and regulatory expectations.


Throughout, the proposed architecture is explicitly \emph{assistive} and 
\emph{human-centred}. Reinsurance is governed by professional judgement, legal 
interpretation, and institutional accountability; no multi-agent system, however 
capable, can replace expert responsibility. Instead, \textsc{R--CMASP} is 
designed to augment expert decision-making by reducing epistemic fragmentation, 
enforcing prudential coherence, and surfacing transparent audit trails, while 
ensuring that ultimate authority remains with underwriters, actuaries, and risk 
managers. More broadly, when decision-making is distributed, normatively 
constrained, and epistemically complex, governed multi-agent architectures offer 
both effective computational machinery and a principled foundation for safe and 
institutionally coherent AI deployment.
\bibliographystyle{unsrtnat}
\bibliography{references}
\end{document}